\newcolumntype{C}[1]{>{\centering\arraybackslash}p{#1}}
\newtheorem{theorem}{Theorem}
\newtheorem{definition}{Definition}
\newtheorem{lemma}{Lemma}
\newtheorem{proposition}{Proposition}
\newtheorem{propositiona}{Proposition}
\newcommand {\R} {{\mathds R}} 
\newcommand {\N} {{\mathds N}} 
\newcommand {\C} {{\mathds C}}
\newcommand {\BE} {\begin{eqnarray}}
\newcommand {\EE} {\end{eqnarray}}
\newcommand {\BS} {\begin{eqnarray}}
\newcommand {\ES} {\end{eqnarray}}
\newcommand {\BEW} {\begin{eqnarray*}}
\newcommand {\EEW} {\end{eqnarray*}}
\newcommand {\x} {{\bm x}}
\renewcommand {\d} {{\rm d}}
\renewcommand {\phi} {{\varphi}}
\begin{document}

\begin{frontmatter}

\title{Tackling the Gross-Pitaevskii energy functional with the Sobolev gradient -- Analytical and numerical results}
  
\author[Paris]{Parimah Kazemi \corauthref{cor}},
\corauth[cor]{Corresponding author.}
\ead{kazemi@ann.jussieu.fr}
\author[Ulm]{Michael Eckart}

\address[Paris]{Laboratoire J.-L. Lions, Universit\'e Pierre et Marie Curie, F-75013 Paris, France}
\address[Ulm]{Institut f\"ur Quantenphysik, Universit\"at Ulm, D-89069 Ulm, Germany}


\begin{abstract}

In the first part of this contribution we prove the global existence and uniqueness of a trajectory that globally converges to the minimizer of the Gross-Pitaevskii energy functional for a large class of external potentials. Using the method of Sobolev gradients we can provide an explicit construction of this minimizing sequence. 

In the second part we numerically apply these results to a specific realization of the external potential and illustrate the main benefits of the method of Sobolev gradients, which are high numerical stability and rapid convergence towards the minimizer.

\end{abstract}

\begin{keyword}
Gross-Pitaevskii energy functional 
\sep 
Sobolev gradient

\PACS 
02.60.Cb 
\sep
02.70.Bf 
\sep
03.65.Ge 
\sep
03.75.-b 
\sep
05.30.Jp 
\sep
67.85.-d 

\end{keyword}
\end{frontmatter}

\pagebreak


\tableofcontents

\newpage

\section*{Introduction} 

Ever since Bose-Einstein condensation was realized in dilute bosonic gases in 1995 \cite{Anderson95,Davis95}, experimentalists as well as theoreticians have been interested in describing the experimental results with an accurate yet also efficient theoretical framework. The most common approach in this context is the Gross-Pitaevskii equation \cite{Gross61,Gross63,Pitaevskii61} which dates back to the 1960s and provides a meaningful description in the regime of weakly interacting bosons. Recently, it was also shown rigorously that the exact ground state of dilute, interacting bosons in an external trap can exactly be described by the minimizer of the Gross-Pitaevskii energy functional in the appropriate asymptotic limit \cite{Lieb00}. 

Even today it is of utmost importance to have an efficient way of finding the ground state of weakly interacting dilute bosons in an arbitrary external trap as this state is either of interest in itself or the starting point for a time evolution. There exists a large number of numerical approaches to find this ground state, among them the density-matrix renormalization group method \cite{whitedmrg}, the Multi-Configuration Time Dependent Hartree method \cite{Meyer90,Meyer00,zollner2006}, the Numerov method \cite{Gonzalez97}, the fast {Fourier} transform method \cite{Bulirsch} using steepest descent or imaginary time propagation, the finite element method \cite{Jung} or the method of Sobolev gradients \cite{Neuberger}. However they differ significantly in their numerical stability and efficiency and the extension to three-dimensional problems is often accompanied by technical difficulties or problems with the efficiency. In this contribution we present the method of Sobolev gradients in detail, since it provides a very powerful framework for analytical investigations and additionally offers the possibility of combining analytical as well as numerical results in a natural way. This method has before been applied, e.g. in the works of \cite{Garcia,Lookman,Garcia1}, yet our study differs from these previous works, since we obtain global convergence for the trajectory that leads to the minimizer. Using our finite differencing scheme, we are thus guaranteed convergence as the mesh goes to zero. Furthermore our results are applicable to a very general group of trapping potentials, both numerically and theoretically.

We first discuss the relevant properties of the Gross-Pitaevskii energy functional which allow us to prove the global convergence of the minimizer of this functional with the help of Sobolev gradients. Having shown this result, our numerical simulations are guaranteed to provide the correct ground state in the limit of an infinitely fine grid on which the Gross-Pitaevskii equation is solved. Additionally, we highlight the numerical advantages of the method of Sobolev gradients in the second part of this contribution, where we discuss one-, two- and three-dimensional simulations to obtain the ground state of the Gross-Pitaevskii energy functional in the presence of a generalized Mexican hat potential.

To set the stage we begin with the definition of the $d$-dimensional Gross-Pitaevskii energy functional $\tilde{E}(\psi)$ for $N$ interacting bosons of mass $m$ in an external trapping potential $\tilde{V}_{trap}(\tilde{\x})$
\BEW
\tilde{E}(\psi)&=&\int_{\R^d} \left(\frac{\hbar^2}{2m} | \nabla \psi(\tilde{\x})|^2 + \tilde{V}_{trap}(\tilde{\x})| \psi(\tilde{\x})|^2 + \frac{\tilde{g}}{2} | \psi(\tilde{\x})|^4\right) \d\tilde{\x} \, ,
\EEW
where $\tilde{g}$ denotes the coupling constant which for $d=3$ reads $\tilde{g}={4 \pi \hbar^2 a_s}/{m}$ and is determined by the s-wave scattering length $a_s$. The wave function $\psi(\tilde{\x})$ is complex valued on $\R^d$ and we are interested in minimizing the energy functional subject to the normalization condition 
\BEW
\int_{\R^d} | \psi(\tilde{\x})|^2 \d\tilde{\x} &=& N . 
\EEW
We restrict ourselves to external potentials that are measurable, locally {bound\-ed} and tend to infinity for $|\tilde{\x}| \to \infty$. Thus the potential is bounded from below and without loss of generality we consider the minimum to be zero. Neglecting the quartic self-interaction term in the energy functional, the minimizer is equivalent to the ground state of the standard Schr\"odinger Hamiltonian
\BEW
\hat{H} & = & -\frac{\hbar^2}{2m} \Delta + \tilde{V}_{trap}(\tilde{\x})
\EEW
which provides a natural energy unit of $\hbar \omega$. For the most commonly used example of an isotropic harmonic oscillator $\tilde{V}_{trap}(\tilde{\x})=m\omega^2\tilde{\x}^2/2$ this can be shown trivially and the corresponding ground state wave function is a Gaussian. From the energy unit we can derive an appropriate length unit which is given by $a_0=\sqrt{{\hbar}/{m \omega}}$. By rescaling the original energy functional in terms of these units, that means $\tilde{\x} = a_0 {\x}$ and $\tilde{E}(\psi) = \hbar \omega E(u)$, we obtain
\BE
 \label{gpnondim}
E(u)=  \int_{\R^d} \left(\frac{1}{2}| \nabla u(\x) |^2 + V_{trap}(\x) |u(\x)|^2 + \frac{g}{2} |u(\x)|^4\right) \d\x \,,
\EE
where $u(\x)= {{a_0}^{d/2}}\psi(\tilde{\x})$, $V_{trap}(\x) = \tilde{V}_{trap}(\tilde{\x})/{\hbar\omega}$ and $g = {a_0}^{-d}{\tilde{g}}$. The normalization condition remains unchanged and reads
\BE\label{normcond}
\int_{\R^d} | u(\x)|^2 \d\x = N \,. 
\EE

\pagebreak
For the following discussion of the minimization of this functional, we restrict ourselves to a bounded domain $D \subset \R^d$
and minimize the energy functional \eqref{gpnondim} subject to the constraint \eqref{normcond}. This problem has already been investigated in many studies, both theoretical and numerical (\cite{Fetter1995,Edwards1996,Schneider1999,Chiofalo2000,Bao,Bao1,Minguzzi2004} and references therein). The existing result for convergence is the existence of a minimizing sequence that converges strongly in the $L^p$ norm to a member of the Sobolev space $H=H^{1,2}(D,\C)$.  In contrast, we will show via steepest descent with a Sobolev gradient the strong convergence in the $H$ norm. This is our main theoretical result and the convergence is obtained for a wide class of trapping potentials. Furthermore, we do not only prove the convergence in an abstract sense but actually derive the minimizing sequence $z(t)$ in a constructive way, which allows us to use this sequence for our numerical simulations.

The method of using Sobolev gradients to obtain stationary solutions of an energy functional provides various advantages compared to other methods which are currently being used. First, since the convergence is in the $H$ norm we not only get that $z(t)$ converges in the $L^2$ norm to a member of $H$, but we also get that the partial derivatives $\partial_i (z(t))$ converge in the $L^2$ norm, for $i=1,2,\ldots,d$.  Not only is this a stronger form of convergence than what has been proved so far, but it also demonstrates that the numerical simulations based on this method will be well behaved in two ways. Since the continuous version of steepest descent converges, it is clear that for an increasing number of discretization points in our numerical simulations, the solution of the discrete minimization problem converges to a member of $H$.  Additionally, one can expect that the convergence of the discrete problem will be smooth since the convergence of the continuous problem is in $H$.  We outline these benefits of the Sobolev gradient in more detail in the numerical part.  Eventually it is the combination of our theoretical results and their practical numerical implementation which constitutes the main new contribution of this paper to the problem of finding and analyzing critical points of the Gross-Pitaevskii energy functional.

\section{Analytical results} 

The goal of the analytical part of our contribution is the proof of the global existence and uniqueness of a trajectory that globally converges to the minimizer of the Gross-Pitaevskii energy functional. To arrive at this goal, section \ref{basic_properties} provides basic properties of this functional which are essential for our proof.
In section \ref{minimization_Sob} we define the gradient that we will use for the minimization and describe how we incorporate the normalization constraint of the desired ground state into this gradient using a particular class of projections.  We set up a trajectory using this gradient and consider the resulting evolution equation. 

Of the final two sections of the analytical part, section \ref{proj} deals with properties of the class of projections we obtained. We will need these properties in section \ref{steepest_descent} in order to obtain global existence and uniqueness as well as global convergence for the minimizing trajectory.

To avoid confusion, we define the notion of a Fr\'{e}chet derivative before we start with the analytical details, since it will commonly be used throughout this paper.  If $G$ is a function from one Hilbert space to another, we denote the first Fr\'{e}chet  derivative of $G$ at $u$ by $G'(u)$ and by $G''(u)$ we denote the second Fr\'{e}chet derivative of $G$ at $u$, provided these derivatives exist.  Furthermore, we remind the reader of the definition of Fr\'{e}chet differentiability.
\vspace{3mm}
\begin{definition}
Let $G:H_1 \rightarrow H_2$ be a map between the two Hilbert spaces $H_1$, $H_2$.  $G$ is Fr\'{e}chet differentiable at $u \in H_1$ if there is a continuous linear transformation $T_u: H_1 \rightarrow H_2$ so that for all $\epsilon > 0$ there exists $\delta > 0$ so that for all $h \in H_1$ with $\|h\|_{H_1} < \delta$ the following inequality holds
\BEW
\frac {|E(u+h) - E(u) - T_u h|}{\|h\|_{H_1}} < \epsilon \, .
\EEW
We write $T_u=G'(u)$ and $G$ is twice Fr\'{e}chet differentiable if the map $u \rightarrow E'(u)h$ from $H_1$ to $H_2$ is Fr\'{e}chet differentiable.
\end{definition}

In cases where $G$ is known to be Fr\'{e}chet differentiable, the Fr\'{e}chet derivative and the Gateaux derivative of $G$ coincide. The Fr\'{e}chet derivative can thus be computed in the following way
\BEW
G'(u)h = \lim_{t \rightarrow 0} \frac{G(u + th) - G(u)}{t}
\EEW
and is similar to a functional derivative which is commonly used in Quantum Field Theory.

\subsection{Basic properties of the energy functional}
\label{basic_properties}

In this section we discuss the basic properties of the functional \eqref{gpnondim} which are necessary to obtain our analytical results, in particular the size and shape of the domain $D$ of the energy functional as well as its convexity.  Furthermore, we state the assumptions we make on the trapping potential and show that they do not lead to a loss of generality. 

We note that the functional $E(u)$ is well defined for potentials that are non-negative and bounded on $D$, which follows from the Sobolev embedding theorem. Here we assume that $D$ is a bounded domain in $\R^d$ with a regular boundary, that is it satisfies the cone condition. In practice, $D$ is usually determined by the condition that the ground state effectively vanishes at the border of the domain. In the case of an isotropic harmonic trap as external potential, that is $V_{trap}(\x)={\x^2}/{2}$, we can identify the domain to be a ball of radius $R$. This radius is a multiple of the width of the Gaussian in the case of weakly interacting bosons, whereas it is a multiple of the Thomas-Fermi radius in the case of a strongly interacting BEC. For the following discussion let $L^2=L^2(D,\C)$ and $H=H^{1,2}(D,\C)$.

For three dimensions, the Sobolev embedding states that 
\vspace{3mm}
\begin{theorem}
$H^{1,2}(D,\R)$ is continuously embedded in $L^q(D,\R)$ if $D$ is open and satisfies the cone condition and $2 \leq q \leq 6$. 
\end{theorem}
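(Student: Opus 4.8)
The plan is to reduce the full range of exponents to the single critical endpoint $q=6$ and then establish that endpoint embedding by the classical Gagliardo--Nirenberg--Sobolev argument, transported from $\R^3$ to $D$ by an extension operator. First I would dispose of the intermediate exponents. The inclusion $H^{1,2}(D,\R)\hookrightarrow L^2(D,\R)$ is immediate from the definition of the $H^{1,2}$ norm. Since $D$ is bounded, Hölder's inequality gives $\|u\|_{L^q(D)}\le |D|^{1/q-1/6}\,\|u\|_{L^6(D)}$ for every $2\le q\le 6$; equivalently one may interpolate through $\|u\|_{L^q}\le\|u\|_{L^2}^{1-\theta}\|u\|_{L^6}^{\theta}$ with $\tfrac1q=\tfrac{1-\theta}{2}+\tfrac{\theta}{6}$. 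Either way, once the critical case $q=6$ is settled the whole range follows, so it suffices to prove $\|u\|_{L^6(D)}\le C\,\|u\|_{H^{1,2}(D)}$.

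Next I would prove the critical inequality on all of $\R^3$ for smooth compactly supported functions and only afterwards move it to $D$. The starting point is the elementary estimate, valid for $w\in C_c^\infty(\R^3)$, that $\|w\|_{L^{3/2}(\R^3)}\le C\prod_{i=1}^{3}\|\partial_i w\|_{L^1(\R^3)}^{1/3}$, obtained by writing $w(\x)$ as the integral of $\partial_i w$ along each coordinate axis and applying the generalized Hölder inequality. Applying this with $w=|u|^4$, using $|\nabla w|\le 4|u|^3|\nabla u|$ together with Cauchy--Schwarz, yields $\|u\|_{L^6(\R^3)}^4\le C\,\|u\|_{L^6(\R^3)}^3\,\|\nabla u\|_{L^2(\R^3)}$, and hence the Sobolev inequality $\|u\|_{L^6(\R^3)}\le C\,\|\nabla u\|_{L^2(\R^3)}$ for all $u\in C_c^\infty(\R^3)$, which extends to $H^{1,2}(\R^3,\R)$ by density.

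To transfer this to the bounded domain $D$ I would invoke an extension operator. The cone condition on $D$ is precisely the hypothesis under which a Calder\'on-type extension theorem provides a bounded linear map $\mathcal{E}:H^{1,2}(D,\R)\to H^{1,2}(\R^3,\R)$ with $\mathcal{E}u=u$ on $D$ and $\|\mathcal{E}u\|_{H^{1,2}(\R^3)}\le C\,\|u\|_{H^{1,2}(D)}$. Applying the $\R^3$ inequality to $\mathcal{E}u$ and restricting back to $D$ then gives $\|u\|_{L^6(D)}\le\|\mathcal{E}u\|_{L^6(\R^3)}\le C\,\|\nabla\mathcal{E}u\|_{L^2(\R^3)}\le C\,\|u\|_{H^{1,2}(D)}$, which completes the critical embedding and, with the first paragraph, the full statement. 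The argument is insensitive to whether the functions are real- or complex-valued, so the later use of $H=H^{1,2}(D,\C)$ requires no separate treatment.

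The step I expect to be the main obstacle is the construction of the extension operator under the mere cone condition rather than a smoother (e.g. Lipschitz) boundary: for Lipschitz domains one has the clean reflection-and-partition-of-unity construction, whereas the cone condition is weaker and the rigorous extension result in this generality is technically delicate. An alternative that sidesteps the global extension is to work directly from the cone condition via Sobolev's integral representation, estimating $u(\x)$ pointwise by a Riesz potential of $|\nabla u|$ taken over the admissible cone and then invoking the Hardy--Littlewood--Sobolev inequality; this keeps everything on $D$ at the price of more intricate potential estimates. In a self-contained write-up I would cite the extension theorem as a black box and concentrate the detailed work on the Gagliardo--Nirenberg--Sobolev chain of the second paragraph.
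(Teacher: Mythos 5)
Your reduction to $q=6$ and your Gagliardo--Nirenberg--Sobolev chain on $\R^3$ are both correct; the genuine gap is the transfer step. Under the bare cone condition assumed in the statement, the bounded extension operator $\mathcal{E}\colon H^{1,2}(D,\R)\to H^{1,2}(\R^3,\R)$ on which your main argument hinges need not exist: Calder\'on's extension theorem requires the \emph{uniform} cone property (for a bounded domain this amounts to a Lipschitz boundary), which is strictly stronger than the cone condition. Concretely, let $D$ be the open unit ball in $\R^3$ with the half-plane slit $\{(x_1,x_2,0)\colon x_1\ge 0\}$ removed. Every point of $D$ is the vertex of a small cone of fixed size contained in $D$ (tilted away from the slit), so $D$ satisfies the cone condition; but the function that equals a smooth bump $\eta$, centered at an interior point of the slit, on $D\cap\{x_3>0\}$ and equals $0$ on $D\cap\{x_3\le 0\}$ lies in $H^{1,2}(D,\R)$ and admits \emph{no} extension in $H^{1,2}(\R^3,\R)$ whatsoever, since an $H^{1,2}(\R^3)$ function has coinciding traces on the two faces of the slit. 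So the ``black box'' you plan to cite is not a theorem at this level of generality, even though the embedding itself remains true for such $D$ (e.g.\ split $D$ into finitely many Lipschitz pieces and sum the estimates).

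For comparison: the paper does not prove the theorem at all --- it quotes it from Adams' \emph{Sobolev Spaces} --- and the proof given there, valid under the mere cone condition, is exactly the route you mention only as an aside: a pointwise Sobolev integral representation bounding $|u(\x)|$ by a Riesz potential of $|\nabla u|$ over the cone with vertex $\x$ plus an averaged term in $u$, followed by $L^p$ estimates for such potentials; no extension to $\R^3$ is ever used, precisely because none is available. The repair is therefore to promote that alternative to the main argument (or to strengthen the hypothesis to a uniform cone/Lipschitz boundary, which would prove a weaker theorem than the one stated). Your first two paragraphs, and the closing remark that the real- and complex-valued cases are identical, can be kept unchanged.
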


Furthermore, one can obtain that $H^{1,2}(D,\C)$ is continuously embedded in $L^q(D,\C)$ if $D$ is open and satisfies the cone condition and $q$ satisfies the inequality for the above theorem.  For the one- and two-dimensional case this theorem also holds, but a much stronger result is available.   See \cite{adams} for details on Sobolev spaces and the Sobolev embedding theorem.  Next, we present our result for uniform convexity.
\vspace{3mm}
\begin{lemma}
\label{convexity}
If we assume that the trapping potential is bounded away from zero, $E(u)$ is uniformly and strictly convex.  In other words, for all $\epsilon > 0$ with $V_{trap}(\x) \geq \epsilon$, there exists $\delta > 0$ so that for all $u$ and $h \in H$
\BEW
E''(u)(h,h) &\geq& \delta \|h\|_H^2 \, .
\EEW
\end{lemma}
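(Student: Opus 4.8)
The plan is to compute the second Fréchet derivative $E''(u)(h,h)$ explicitly and then bound it from below, exploiting that the interaction term is convex and that the potential forces coercivity in the full $H$ norm. I would first split $E$ into its three constituents: the kinetic term $\tfrac12\int_D|\nabla u|^2\,\d\x$, the potential term $\int_D V_{trap}|u|^2\,\d\x$, and the quartic self-interaction $\tfrac{g}{2}\int_D|u|^4\,\d\x$. The first two are quadratic, so their second variations are immediate, whereas the quartic one requires a short expansion. Before differentiating I would invoke the Sobolev embedding theorem stated above (with $H\hookrightarrow L^4$ valid for $d\le 3$ and $D$ bounded) to guarantee that every term, and in particular the quartic one, is finite on $H$ and that $E$ is genuinely twice Fréchet differentiable, so that the ensuing pointwise computation is legitimate.

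For the computation I would regard $H$ as a real Hilbert space of complex-valued functions and evaluate $\tfrac{\d^2}{\d t^2}\big|_{t=0}E(u+th)$ for real $t$. The kinetic term contributes $\int_D|\nabla h|^2\,\d\x$ and the potential term $2\int_D V_{trap}|h|^2\,\d\x$. For the quartic term I would expand $|u+th|^2=|u|^2+2t\,\mathrm{Re}(u\bar h)+t^2|h|^2$, square it, and read off the coefficient of $t^2$. This yields
\[
E''(u)(h,h)=\int_D|\nabla h|^2\,\d\x+2\int_D V_{trap}|h|^2\,\d\x+g\int_D\bigl(2|u|^2|h|^2+4(\mathrm{Re}(u\bar h))^2\bigr)\,\d\x .
\]

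With this identity in hand the convexity is essentially read off. Since the coupling is repulsive, $g\ge 0$, so the entire interaction integrand is nonnegative and may be discarded to obtain a lower bound; crucially, this also removes the $u$-dependence, which is what makes the estimate \emph{uniform}. Using the hypothesis $V_{trap}(\x)\ge\epsilon>0$ I would then estimate
\[
E''(u)(h,h)\ge\int_D|\nabla h|^2\,\d\x+2\epsilon\int_D|h|^2\,\d\x\ge\delta\int_D\bigl(|\nabla h|^2+|h|^2\bigr)\,\d\x=\delta\,\|h\|_H^2,
\]
with the explicit, $u$-independent choice $\delta=\min\{1,2\epsilon\}$, which is precisely uniform strict convexity.

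I expect the only real obstacle to lie in the quartic term: getting the correct $t^2$-coefficient for complex-valued $u,h$ (the cross contribution $4(\mathrm{Re}(u\bar h))^2$ is easy to miscount) and, more substantively, justifying that $u\mapsto\tfrac{g}{2}\int_D|u|^4\,\d\x$ is genuinely $C^2$ on $H$ rather than merely Gateaux differentiable. That justification rests on the continuity of the Sobolev embedding into $L^4$ (hence the dimensional restriction $d\le 3$) together with the boundedness of $D$; once it is secured, the second-variation identity holds rigorously and the coercivity estimate is routine.
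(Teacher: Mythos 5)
Your proposal is correct and follows essentially the same route as the paper's own proof: compute $E''(u)(h,h)$ explicitly, discard the nonnegative interaction terms $2g|u|^2|h|^2+4g(\Re\langle h,u\rangle)^2$, and conclude with $\delta=\min(2\epsilon,1)$ from the hypothesis $V_{trap}\geq\epsilon$. Your additional remarks on the Sobolev embedding $H\hookrightarrow L^4$ justifying twice Fr\'echet differentiability are sound and consistent with what the paper assumes elsewhere.
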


\begin{proof}
To see this we do a direct computation to obtain
\begin{eqnarray*}
E''(u)(h,h) & = & \int_D \left(|\nabla h(\x)|^2 + 2V_{trap}(\x) |h(\x)|^2 \right) \d\x\\
&&+\int_D 2g |h(\x)|^2|u(\x)|^2 + 4g \ (\Re \langle h(\x) ,u(\x) \rangle)^2 \d\x \\
& \geq & \int_D \left(| \nabla h(\x)|^2 + 2 V_{trap}(\x) |h(\x)|^2\right) \d\x \\
& \geq &  \int_D \left(\delta | \nabla h(\x)|^2 + \delta |h(\x)|^2\right) \d\x = \delta \|h\|_H^2
\end{eqnarray*}
where $\delta = \text{min}(2\epsilon,1)$.
\end{proof}

Furthermore, note that this assumption on the trapping potential does not lead to a loss of generality as given by the following lemma.

\pagebreak

\begin{lemma}
Let $E(u)$ be as in \eqref{gpnondim} and
\BEW
E_{\epsilon}(u) &=& E(u) + \epsilon \int_D |u(\x)|^2\d\x \,.
\EEW
Then for $\beta(u)= \int_D |u(\x)|^2\d\x$ and $h \in null(\beta'(u))$, $E'(u)h=0$ iff $E_{\epsilon}'(u)h=0$.
\end{lemma}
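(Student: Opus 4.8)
The plan is to exploit the fact that $E_\epsilon$ is obtained from $E$ by adding a multiple of the very constraint functional $\beta$ whose kernel we are working in, together with the linearity of the Fr\'echet derivative. Writing $\beta(u)=\int_D|u(\x)|^2\,\d\x$, we have by definition
\BEW
E_{\epsilon}(u) = E(u) + \epsilon\,\beta(u) \,,
\EEW
so that, since Fr\'echet differentiation is linear, the derivatives satisfy $E_{\epsilon}'(u) = E'(u) + \epsilon\,\beta'(u)$ as bounded linear functionals on $H$.

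Next I would compute $\beta'(u)$ explicitly. Using that the Gateaux derivative coincides with the Fr\'echet derivative for differentiable maps (as recalled in the excerpt), a direct calculation gives
\BEW
\beta'(u)h = \lim_{t\to 0}\frac{1}{t}\int_D\left(|u(\x)+th(\x)|^2 - |u(\x)|^2\right)\d\x = 2\,\Re\int_D \bar{u}(\x)\,h(\x)\,\d\x \,,
\EEW
which is precisely the linear functional whose kernel $\mathrm{null}(\beta'(u))$ is the tangent space to the normalization manifold $\{\beta = N\}$ at $u$.

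The assertion then follows at once. For any $h\in\mathrm{null}(\beta'(u))$ we have $\beta'(u)h = 0$, and therefore
\BEW
E_{\epsilon}'(u)h = E'(u)h + \epsilon\,\beta'(u)h = E'(u)h \,.
\EEW
Consequently $E'(u)h = 0$ if and only if $E_{\epsilon}'(u)h = 0$, which is the claim. This lemma is exactly what makes the hypothesis of Lemma \ref{convexity} harmless: replacing $V_{trap}$ by $V_{trap}+\epsilon$ turns $E$ into $E_\epsilon$ without altering the constrained critical points on the normalization sphere.

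There is no genuine obstacle here; the statement is a direct algebraic consequence of the linearity of differentiation once one observes that the perturbation $\epsilon\,\beta$ contributes only the term $\epsilon\,\beta'(u)h$, which vanishes on the kernel of $\beta'(u)$. The only points requiring (routine) verification are that $\beta$ is Fr\'echet differentiable on $H$, which is immediate since $\beta$ is a continuous quadratic form and $H$ embeds continuously in $L^2$, and the elementary expansion of $|u+th|^2$ used above.
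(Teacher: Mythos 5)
Your proof is correct and follows essentially the same route as the paper's: both use linearity of the Fr\'echet derivative to write $E_{\epsilon}'(u)h = E'(u)h + \epsilon\,\beta'(u)h$ and then note that the $\epsilon$-term vanishes on $\mathrm{null}(\beta'(u))$. Your explicit computation of $\beta'(u)h$ is a harmless extra (the paper records that formula later, in its appendix on the projection), but it is not needed for the lemma itself.
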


\begin{proof}
Note that 
\BEW
E_{\epsilon}'(u)h & = & E'(u) h + \epsilon \beta'(u)h.
\EEW
Thus if $h \in null(\beta'(u))$ we have that $\beta'(u)h=0$ and
\BEW
E_{\epsilon}'(u)h=E'(u)h + \epsilon \beta'(u)h = E'(u)h.
\EEW
Thus we have that $E_{\epsilon}'(u)h=0$ iff $E'(u)h=0$ for $h \in null(\beta'(u))$.  
\end{proof}

Thus if the minimum of the trapping potential is zero, we can add a multiple of $\int_D |u(\x)|^2\d\x$ to the energy to obtain $E_{\epsilon}(u)$ as given above.  Additionally, we see that if we have $u$ so that $E_{\epsilon}'(u)h=0$ for all $h \in null (\beta'(u))$, then $E'(u)h=0$ also for all $h \in null(\beta'(u))$ and the desired result is achieved.

\subsection{Minimization of the energy functional with the Sobolev gradient}
\label{minimization_Sob}

In this section we present the method of Sobolev gradients which can be applied to the problem of finding the minimizer of the Gross-Pitaevskii energy functional \eqref{gpnondim}. We define the Sobolev gradient and subsequently discuss how the normalization constraint \eqref{normcond} can be incorporated in the gradient. We pursue this approach since we propose a direct minimization for which a traditional Lagrange multiplier method is not suited. The motivation and background for this minimization closely follows \cite{jwn}.  

Since $E$ is a continuously twice differentiable function from $H$ to $\R \subset \C$, the Riesz representation theorem provides the result that for $u \in H$, there is a member of $H$ which we denote by $\nabla_H E(u)$ so that
\BE \label{sblvgrad}
E'(u)h= \langle h , \nabla_H E(u) \rangle_{H}
\EE
for all $h \in H$. We denote $\nabla_H E(u)$ to be the Sobolev gradient of $E$ at $u$.  

We wish to minimize $E$ using the method of steepest descent with this gradient. However, we first need to incorporate the normalization constraint into our formulation.  For $u \in H$, let 
\BE \label{beta}
\beta(u) = \int_D |u(\x)|^2\d\x \, ,
\EE
which means that we want to minimize the energy functional \eqref{gpnondim} subject to the constraint $\beta(u)=N$.  We propose a direct minimization method using steepest descent with a gradient obtained with respect to a Sobolev inner product.  In order to incorporate the normalization constraint into our formulation, we cannot use a traditional Lagrange multiplier method to add a multiple of the constraint to the energy as this would make the resulting functional unbounded from below.  

However, in drawing motivation from the method of Lagrange multipliers, we see that if such a minimum is achieved at $u$ then the gradient of the energy functional $E$ at $u$ and the gradient of $\beta$ at $u$ are parallel, and additionally $\beta(u)=N$.  Thus we seek $u \in H$ so that $\beta(u)=N$ and if $\langle h , \nabla_H \beta(u) \rangle_H = 0$ for $h \in H$, then $\langle h , \nabla_H E(u) \rangle_H=0$ also.  Note that this translates into finding $u$ so that $\beta(u)=N$ and $E'(u)h=0$ for all $h \in null (\beta'(u))$.  The rest of this section is devoted to a formulation of how such a $u$ can be found in the scope of steepest descent.  The general setup on how to include constraints in the gradient is explained in detail in chapter ``Boundary and Supplementary Conditions'' of \cite{jwn}, but the essential ingredients are reviewed in the next paragraph.

As previously stated, for each $u \in H$ one can find a member of $H$, denoted by $\nabla_H E(u)$, so that equation \eqref{sblvgrad} holds for all $h \in H$. Note that for each $u \in H$, the nullspace of $\beta'(u)$ is a closed linear subspace of $H$.  Thus for each $u \in H$, there is a projection from $H$ onto the nullspace of $\beta'(u)$.  We denote this projection by $P_u$ in this paper.  Let $u_0 \in H$ so that $\beta(u_0)=N$ and define $z: [0, \infty) \rightarrow H$ so that
\BE \label{sd}
z(0)=u_0 \text{ and } z'(t)=-P_{z(t)} \nabla_H E(z(t)).
\EE
Here $z'(t)$ denotes the Fr\'{e}chet derivative as defined above and we will show that $z$ is defined for all $t \geq 0$ in section \ref{steepest_descent}. Essentially, we project the Sobolev gradient of $E$ at $z(t)$ onto the nullspace of $\beta'(z(t))$ since this leads to a constant $\beta(z)$. This can be seen by considering
\BEW
(\beta (z))'(t) = \beta'(z(t)) z'(t) = - \beta'(z(t)) (P_{z(t)} \nabla_H E(z(t)))= 0 
\EEW
where we used that $P_{z(t)}$ is the projection of $H$ onto the nullspace of $\beta'(z(t))$. Hence, $\beta (z)$ is constant and if $u=\lim_{t \rightarrow \infty} z(t)$, then $\beta(u) = \beta(u_0)$.  

Thus by projecting the Sobolev gradient of $E$ at $z(t)$ into the nullspace of $\beta'(z(t))$ for each $t$, we obtain that $z(t)$ satisfies the constraint for all $t$ and in the limit $u$.  We will also show that if $u=\lim_{t \rightarrow \infty} z(t)$ for $z$ as given in equation \eqref{sd}, then $E'(u)h=0$ for all $h$ in the nullspace of $\beta'(u)$ which will give us the $u$ that we desired above.  We also provide a convergence result for $z$ in section \ref{steepest_descent}. In particular, if we assume that the trapping potential is bounded away from zero and non-negative we obtain that $\lim_{t \rightarrow \infty} z(t)$ exists.

In our formulation we will need to know what the projection $P_u$ is.  The next part of this work is devoted to finding an expression that we can use both in our analysis and in doing our simulations.

\subsection{Properties of the projection onto the nullspace of $\beta'(u)$}
\label{proj}

In this section we discuss the previously mentioned projection $P_u$ which constitutes an essential part of the Sobolev gradient. Before we show the explicit representation  for the projection $P_u$, we need to introduce the following definition and a theorem which is proved in \cite{pkjwn}.  
\vspace{3mm}
\begin{definition} \label{M}
Let $f \in L^2(D)$, and define $\alpha_f$ from $H^{1,2}(D)$ to $\R$ so that $\alpha_f(u)=\langle u , f \rangle_{L^2(D)}$.  Since $H^{1,2}(D)$ is continuously embedded in $L^2(D)$, $\alpha_f$ is continuous from $H^{1,2}(D)$ to $\R$.  Thus there exists $v \in H^{1,2}(D)$ so that $\langle u,f \rangle_{L^2(D)} = \langle u , v \rangle_{H^{1,2}(D)}$ for all $u \in H^{1,2}(D)$.  Define $Mf=v$.  
\end{definition}
\vspace{3mm}
\begin{theorem} \label{Mcont}
$M$ is injective.  $M \in L(X,Y)$ with $X=H^{1,2}(D)$ or $L^2(D)$ and $Y=H^{1,2}(D)$ or $L^2(D)$. In each case, the operator norm of $M$ is less than or equal to one.
\end{theorem}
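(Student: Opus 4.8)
The plan is to realize $M$ as the Riesz representation in $H$ of the family of bounded functionals $\alpha_f$, and to observe that every required estimate is governed by the single embedding inequality $\|u\|_{L^2}\le\|u\|_H$, valid for all $u\in H$ because $\|u\|_H^2=\|u\|_{L^2}^2+\|\nabla u\|_{L^2}^2$. In other words, the inclusion $\iota\colon H\hookrightarrow L^2$ has operator norm at most one, and this is essentially the only analytic input beyond the Riesz theorem and density of test functions. (One need not identify $M$ concretely as the solution operator of $-\Delta+1$ with natural boundary conditions, although that is its meaning.)

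First I would prove boundedness of $M$ from $L^2$ into $H$ with norm at most one. For fixed $f\in L^2$, Cauchy--Schwarz in $L^2$ together with the embedding bound gives $|\alpha_f(u)|=|\langle u,f\rangle_{L^2}|\le\|u\|_{L^2}\|f\|_{L^2}\le\|u\|_H\|f\|_{L^2}$, so $\alpha_f$ is bounded on $H$ with dual norm $\|\alpha_f\|_{H^*}\le\|f\|_{L^2}$. Since $Mf$ is by Definition~\ref{M} the Riesz representative of $\alpha_f$ and the Riesz map is an isometry, $\|Mf\|_H=\|\alpha_f\|_{H^*}\le\|f\|_{L^2}$, which is exactly the claimed bound for $X=L^2$, $Y=H$.

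The three remaining cases then follow purely by bookkeeping, inserting the embedding inequality $\|\cdot\|_{L^2}\le\|\cdot\|_H$ on whichever side is needed. For $X=Y=L^2$ one has $\|Mf\|_{L^2}\le\|Mf\|_H\le\|f\|_{L^2}$; for $X=Y=H$ one restricts to $f\in H$ (legitimate since $H\subset L^2$ and $Mf\in H$ for every $f\in L^2$) and writes $\|Mf\|_H\le\|f\|_{L^2}\le\|f\|_H$; and for $X=H$, $Y=L^2$ one combines both to obtain $\|Mf\|_{L^2}\le\|Mf\|_H\le\|f\|_{L^2}\le\|f\|_H$. In each case the operator norm is at most one.

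For injectivity, suppose $Mf=0$. The defining identity gives $\langle u,f\rangle_{L^2}=\langle u,Mf\rangle_H=0$ for all $u\in H$, so $f$ is orthogonal in $L^2$ to every element of $H$; since $H$ is dense in $L^2$ (it contains $C^\infty_c(D)$), this forces $f=0$. I do not anticipate a genuine obstacle: the proof is elementary once the embedding constant is pinned at one, and the only steps needing care are the bookkeeping across the four norm cases and invoking density of $H$ in $L^2$ for injectivity.
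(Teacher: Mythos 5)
Your proof is correct, but note that the paper itself offers no argument to compare against: Theorem~\ref{Mcont} is stated with its proof deferred to the reference \cite{pkjwn}, so your write-up is, in effect, supplying a proof the paper omits. Your argument is sound and self-contained: the Riesz isometry gives $\|Mf\|_H=\|\alpha_f\|_{H^*}\le\|f\|_{L^2}$, the remaining three operator-norm bounds follow from the embedding inequality $\|\cdot\|_{L^2}\le\|\cdot\|_H$ (which indeed has constant exactly one for the norm $\|u\|_H^2=\|u\|_{L^2}^2+\|\nabla u\|_{L^2}^2$ used throughout the paper), and injectivity follows from density of $H$ in $L^2$ via $C_c^\infty(D)\subset H^{1,2}(D)$. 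Two small remarks. First, the $L^2\to H$ bound can be obtained even more directly without invoking the dual-norm isometry: take $u=Mf$ in the defining identity to get $\|Mf\|_H^2=\langle Mf,f\rangle_{L^2}\le\|Mf\|_{L^2}\|f\|_{L^2}\le\|Mf\|_H\|f\|_{L^2}$, and divide. Second, your parenthetical identification of $M$ as the solution operator of $-\Delta+1$ with natural boundary conditions is consistent with the paper's Theorem~\ref{Midentity}, $M=(I+W^*W)^{-1}$, from which all four norm bounds could alternatively be read off by spectral arguments; your route avoids that machinery and is the more elementary one.
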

 
These two results yield the following relationship between the two inner products $\langle \cdot, \cdot \rangle_{L^2(D)}$ and $\langle \cdot , \cdot \rangle_{H^{1,2}(D)}$
\BEW
\langle u,f \rangle_{L^2(D)} = \langle u , Mf \rangle_{H^{1,2}(D)}
\EEW
for all $u \in H^{1,2}(D)$ and $f \in L^2(D)$. 

We also have the following theorem for which we start with $f \in H^{1,2}(D)$ and let $Wf=\nabla f$. Thus, $W$ is a closed densely defined linear operator from $L^2(D)$ to $[L^2(D)]^d$ whose adjoint we denote  as $W^*$. The following representation for $M$ is proved in \cite{pkgl}.
\vspace{3mm}
\begin{theorem} \label{Midentity}
Let $M$ be the transformation defined in definition \ref{M}, then $M=(I+W^* W)^{-1}$.
\end{theorem}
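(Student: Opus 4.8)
The plan is to unravel the definition of $M$ directly in terms of the adjoint $W^*$ and then appeal to the fact that $I + W^*W$ is a bijection. First I would write out the $H^{1,2}(D)$ inner product explicitly, namely $\langle u, v\rangle_{H^{1,2}(D)} = \langle u, v\rangle_{L^2(D)} + \langle Wu, Wv\rangle_{[L^2(D)]^d}$, so that the defining relation $\langle u, f\rangle_{L^2(D)} = \langle u, Mf\rangle_{H^{1,2}(D)}$ of Definition \ref{M} becomes
\[
\langle u, f\rangle_{L^2(D)} = \langle u, Mf\rangle_{L^2(D)} + \langle Wu, W(Mf)\rangle_{[L^2(D)]^d}
\]
for all $u \in H^{1,2}(D) = \mathrm{dom}(W)$.

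Rearranging yields $\langle Wu, W(Mf)\rangle = \langle u, f - Mf\rangle_{L^2(D)}$ for all $u \in \mathrm{dom}(W)$. This is precisely the statement that $W(Mf)$ lies in $\mathrm{dom}(W^*)$ with $W^*W(Mf) = f - Mf$: the element $f - Mf \in L^2(D)$ is the unique one representing the functional $u \mapsto \langle Wu, W(Mf)\rangle$, uniqueness being guaranteed because $\mathrm{dom}(W) = H^{1,2}(D)$ is dense in $L^2(D)$. Hence $Mf \in \mathrm{dom}(W^*W)$ and $(I + W^*W)Mf = f$ for every $f \in L^2(D)$, i.e. $(I + W^*W)M = I$ on $L^2(D)$.

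To finish I would invoke the classical result of von Neumann that for a closed, densely defined operator $W$ the operator $W^*W$ is self-adjoint and nonnegative, so that $I + W^*W$ maps $\mathrm{dom}(W^*W)$ bijectively onto $L^2(D)$ with a bounded inverse. Injectivity is immediate from $\langle (I+W^*W)w, w\rangle = \|w\|_{L^2}^2 + \|Ww\|_{[L^2]^d}^2 \geq \|w\|_{L^2}^2$ for $w \in \mathrm{dom}(W^*W)$. Applying $(I+W^*W)^{-1}$ on the left of $(I+W^*W)M = I$ and using $Mf \in \mathrm{dom}(W^*W)$ gives $Mf = (I+W^*W)^{-1}f$ for all $f$, which is the claim.

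The main obstacle is the careful bookkeeping of the unbounded operators: one must verify that $Mf$ genuinely lands in $\mathrm{dom}(W^*W)$ rather than merely satisfying a weak identity, and this is exactly where the density of $H^{1,2}(D)$ in $L^2(D)$ (the embedding already used in Definition \ref{M} and Theorem \ref{Mcont}) enters. The remaining ingredient, self-adjointness and invertibility of $I + W^*W$, is standard but should be stated explicitly, since it is what licenses writing $M$ as a genuine two-sided inverse rather than a mere left factor.
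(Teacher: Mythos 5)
Your proof is correct and complete. For comparison: the paper does not actually prove Theorem \ref{Midentity} in the text --- it defers the proof to the reference \cite{pkgl} --- so your argument supplies what the paper only cites. The route you take is the natural one: expand the $H^{1,2}$ inner product, read the resulting weak identity $\langle Wu, W(Mf)\rangle = \langle u, f - Mf\rangle_{L^2}$ as the statement that $W(Mf) \in \mathrm{dom}(W^*)$ with $W^*W(Mf) = f - Mf$ (density of $\mathrm{dom}(W) = H^{1,2}(D)$ in $L^2(D)$ guaranteeing uniqueness of the representative), and conclude $(I + W^*W)M = I$ on $L^2(D)$. One streamlining remark: the appeal to von Neumann's theorem is not strictly needed. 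The identity $(I+W^*W)Mf = f$ for every $f \in L^2(D)$ already shows that $I + W^*W$ maps $\mathrm{dom}(W^*W)$ \emph{onto} all of $L^2(D)$, and your elementary estimate $\langle (I+W^*W)w, w\rangle_{L^2} \geq \|w\|_{L^2}^2$ gives injectivity; together these make $I+W^*W$ a bijection of $\mathrm{dom}(W^*W)$ onto $L^2(D)$, so its inverse is everywhere defined and applying it to your identity yields $M = (I+W^*W)^{-1}$ directly. Self-adjointness of $W^*W$, which von Neumann's theorem provides, is then a by-product rather than an ingredient. The hypotheses you do need --- that $W$ with domain $H^{1,2}(D)$ is closed and densely defined in $L^2(D)$ --- are exactly what the paper asserts when it introduces $W$, so your argument rests only on facts available in the paper's setting.
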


\pagebreak 

Using the same formulation given in Definition \ref{M} for $L^2(D,\C)$ and $H^{1,2}(D,\C)$, one can also obtain an operator $M_{\C} : L^2(D,\C) \rightarrow H^{1,2}(D,\C)$ so that $\langle h , u \rangle_{L^2} = \langle h , M_{\C}u \rangle_H$ for all $h \in H$ and $u \in L^2$.  Then one has the following result.
\vspace{3mm}
\begin{theorem}
If $u=r+is \in L^2(D,\C)$, then $M_{\C}u=Mr + iMs$.
\end{theorem}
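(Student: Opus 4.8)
The plan is to exploit the fact that $M_\C u$ is \emph{uniquely} characterized by the identity $\langle h, u\rangle_{L^2} = \langle h, M_\C u\rangle_H$ holding for all $h \in H$, which is exactly the Riesz-type defining property stated just before the theorem. Hence it suffices to verify that the candidate $Mr + iMs$ satisfies this same identity; uniqueness of the representative then forces $M_\C u = Mr + iMs$. Observe first that since the real operator $M$ maps $L^2(D,\R)$ into $H^{1,2}(D,\R)$, both $Mr$ and $Ms$ are real-valued, so $Mr + iMs$ is a genuine element of $H^{1,2}(D,\C)$ with real part $Mr$ and imaginary part $Ms$, and the decomposition makes sense.

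First I would fix an arbitrary $h \in H$ and write $h = p + iq$ with $p,q \in H^{1,2}(D,\R)$. Expanding the complex $L^2$ inner product $\langle h, u\rangle_{L^2}$ against $u = r + is$ and separating real and imaginary parts produces the four real $L^2$ pairings $\langle p,r\rangle$, $\langle q,s\rangle$, $\langle p,s\rangle$, $\langle q,r\rangle$ (with signs dictated by the chosen conjugation convention). I would then carry out the analogous expansion of $\langle h, Mr + iMs\rangle_H$, splitting the complex $H$ inner product into its $L^2$ contribution $\int_D \overline{h}\,(Mr+iMs)\,\d\x$ and its gradient contribution $\int_D \overline{\nabla h}\cdot\nabla(Mr+iMs)\,\d\x$.

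The key step is recognizing that the pieces regroup correctly. For example, the real part of $\langle h, Mr + iMs\rangle_H$ collects both $\langle p, Mr\rangle_{L^2}$ and $\int_D \nabla p \cdot \nabla(Mr)\,\d\x$, and these two combine into the full \emph{real} $H$-inner product $\langle p, Mr\rangle_{H}$ (and likewise for the $(q,Ms)$ pairing). At this point I invoke the defining property of $M$ itself, namely $\langle p, Mr\rangle_{H} = \langle p, r\rangle_{L^2}$ and $\langle q, Ms\rangle_{H} = \langle q, s\rangle_{L^2}$, which collapses the real part of $\langle h, Mr + iMs\rangle_H$ to precisely the real part of $\langle h, u\rangle_{L^2}$. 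An identical computation on the imaginary part, which pairs $p$ with $Ms$ and $q$ with $Mr$, matches the imaginary part of $\langle h, u\rangle_{L^2}$.

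Having established $\langle h, u\rangle_{L^2} = \langle h, Mr + iMs\rangle_H$ for every $h \in H$, the uniqueness of the representative yields $M_\C u = Mr + iMs$. The only point needing genuine care is the bookkeeping in the third step: one must ensure that the $L^2$ term and the gradient term reassemble into the real $H$-inner product \emph{before} the defining relation for $M$ is applied, and that the complex conjugation convention is tracked consistently so the cross terms carry the right signs. Everything else is a routine bilinear expansion, so I do not expect any substantive analytic difficulty.
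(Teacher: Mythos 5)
Your proposal is correct, but it runs the argument in the opposite direction from the paper, and the difference is worth noting. The paper starts from the unknown object $M_{\C}u$ and \emph{extracts} its components: it tests the defining identity $\langle h , u \rangle_{L^2} = \langle h , M_{\C}u \rangle_H$ against purely real-valued $h = f$, takes real parts to get $\langle f , Mr \rangle_{H^{1,2}(D)} = \langle f , \Re(M_{\C}u) \rangle_{H^{1,2}(D)}$ for all real $f$, and concludes $\Re(M_{\C}u) = Mr$ by nondegeneracy of the real inner product (then repeats for the imaginary part). You instead \emph{propose} the candidate $Mr + iMs$, verify that it satisfies the defining identity for arbitrary complex $h = p + iq$ by expanding all four real pairings and regrouping the $L^2$ and gradient terms into real $H$-inner products, and then invoke uniqueness of the Riesz representative in the complex space. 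Both proofs rest on exactly the same two ingredients (the defining property of $M$ and nondegeneracy/uniqueness of the inner-product representation), so neither is more general; the trade-off is that the paper's choice of real test functions makes each component drop out in one line with no cross-term bookkeeping, whereas your verification route avoids having to introduce and reason about $\Re(M_{\C}u)$ and $\Im(M_{\C}u)$ as separate objects, at the cost of tracking signs through the full complex expansion --- which, as you note, is the only place where care is required, and your sketch handles it correctly.
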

\begin{proof}
We have that $\langle h , u \rangle_{L^2} = \langle h , M_{\C} u \rangle_H$ for all $h \in H$.  First suppose that $h=f + 0i$ which yields 
\begin{equation*}
\Re \langle h , u \rangle_{L^2} = \Re \langle h , M_{\C} u \rangle_H = \langle f , \Re(M_{\C} u) \rangle_{H^{1,2}(D)} \, ,
\end{equation*}
where $\Re$ denotes the real part. Furthermore, we observe that
\begin{equation*}
\Re (\langle h , u \rangle_{L^2} )= \langle f , r \rangle_{L^2(D)} = \langle f , Mr \rangle_{H^{1,2}(D)}.
\end{equation*}  
and thus
\begin{equation*}
\langle f , Mr \rangle_{H^{1,2}(D)} =  \langle f , \Re(M_{\C} u) \rangle_{H^{1,2}(D)}
\end{equation*}
for all $f \in H^{1,2}(D)$ and it must be that $Mr = \Re(M_{\C} u)$.  A similar argument shows that $Ms= \Im (M_{\C} u)$, where $\Im$ denotes the imaginary part.
\end{proof}
From this result it follows that $M_{\C}$ has the same properties as $M$.  From now on we will not distinguish between $M$ and $M_{\C}$ in our notation.  For $u \in L^2(D,\C)$, $M u=Mr + iMs$.

Using these definitions and results, we obtain an explicit formula for the projection $P_u$
\BE \label{eqnpu}
P_u h=(I - Q_u^*(Q_uQ_u^*)^{-1}Q_u) h= h - \frac{\Re \langle u , h \rangle_{L^2}}{\Re \langle u , Mu \rangle_{L^2}} Mu \, ,
\EE
which is derived in Appendix \ref{projection}.

Looking at this expression for the projection $P_u$, we note that as $u$ varies, the associated projections vary in a continuous way. We will use this result to obtain global existence for $z$.  In more precise terms, we have the following result which is proved in Appendix \ref{lipschitz}.
\vspace{3mm}
\begin{proposition} \label{pu}
Suppose $\{u_n\}_{n \geq 1}$ is a sequence of members of $H$ that converges in $L^2$ to $u \neq 0 \in H$.  Then the sequence $\{P_{u_n} \}_{n \geq 1}$ converges in $L(H,H)$ to $P_u$.  Furthermore there is a constant $m$ so that $\|P_{u_n} - P_u \| \leq m\|u_n - u \|_{L^2}$.
\end{proposition}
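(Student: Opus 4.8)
The plan is to work directly from the explicit formula \eqref{eqnpu} for $P_u$ and to estimate $(P_{u_n}-P_u)h$ in the $H$-norm, uniformly over $h$ in the unit ball of $H$, since $\|P_{u_n}-P_u\|_{L(H,H)}=\sup_{\|h\|_H=1}\|(P_{u_n}-P_u)h\|_H$. Abbreviating $a_n=\Re\langle u_n,h\rangle_{L^2}$, $a=\Re\langle u,h\rangle_{L^2}$, $b_n=\Re\langle u_n,Mu_n\rangle_{L^2}$ and $b=\Re\langle u,Mu\rangle_{L^2}$, formula \eqref{eqnpu} gives
\[
(P_{u_n}-P_u)h=\frac{a}{b}Mu-\frac{a_n}{b_n}Mu_n,
\]
so the whole problem reduces to controlling the two scalar quotients $a_n/b_n$ and the single vector $Mu_n$ as $u_n\to u$ in $L^2$.

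The first step, on which everything hinges, is to identify the denominator. Using the defining relation $\langle h,u\rangle_{L^2}=\langle h,Mu\rangle_H$ (valid in the complex case by the preceding theorem) with the admissible choice $h=Mu\in H$, I obtain $\langle Mu,u\rangle_{L^2}=\|Mu\|_H^2$, and hence $b=\Re\langle u,Mu\rangle_{L^2}=\|Mu\|_H^2$, which is real and, by injectivity of $M$ together with $u\neq0$, strictly positive; the identical computation gives $b_n=\|Mu_n\|_H^2$. Next I invoke Theorem \ref{Mcont}, which states $\|M\|_{L(L^2,H)}\le1$, to get the Lipschitz estimates $\|Mu_n-Mu\|_H=\|M(u_n-u)\|_H\le\|u_n-u\|_{L^2}$ and $\|Mu_n\|_H\le\|u_n\|_{L^2}$. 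Since $u_n\to u$ in $L^2$, the norms $\|u_n\|_{L^2}$ are bounded and $b_n=\|Mu_n\|_H^2\to\|Mu\|_H^2=b>0$.

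The main obstacle is to keep the denominators bounded away from zero uniformly in $n$, and the identity $b_n=\|Mu_n\|_H^2$ is exactly what makes this transparent: the $b_n$ form a convergent sequence of positive numbers with positive limit, so $b_0:=\inf_n b_n>0$ (it is the minimum of finitely many positive terms and a tail bounded below by $b/2$). With this uniform lower bound in hand, the remainder is routine algebra. I split
\[
\frac{a_n}{b_n}Mu_n-\frac{a}{b}Mu=\frac{a_n}{b_n}\bigl(Mu_n-Mu\bigr)+\left(\frac{a_n}{b_n}-\frac{a}{b}\right)Mu,
\]
estimate $|a_n-a|\le\|u_n-u\|_{L^2}\|h\|_H$ (using $\|h\|_{L^2}\le\|h\|_H$) and, by factoring the difference of squares, $|b_n-b|\le\|u_n-u\|_{L^2}\bigl(\|u_n\|_{L^2}+\|u\|_{L^2}\bigr)$, and control the quotient difference through $\frac{a_n}{b_n}-\frac{a}{b}=\frac{a_n-a}{b_n}+\frac{a(b-b_n)}{b_nb}$.

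Collecting these bounds, every summand carries a factor $\|u_n-u\|_{L^2}\|h\|_H$ multiplied by a coefficient assembled from the bounded quantities $\sup_n\|u_n\|_{L^2}$, $\|u\|_{L^2}$ and $1/b_0$. Dividing by $\|h\|_H$ and taking the supremum over $h$ then yields $\|P_{u_n}-P_u\|_{L(H,H)}\le m\|u_n-u\|_{L^2}$ for a single constant $m$. This is precisely the asserted Lipschitz estimate, and since $\|u_n-u\|_{L^2}\to0$ it simultaneously delivers the convergence $P_{u_n}\to P_u$ in $L(H,H)$, completing the proof.
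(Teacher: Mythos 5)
Your proof is correct and takes essentially the same approach as the paper's: both work from the explicit formula \eqref{eqnpu}, identify the denominators as $\|Mu_n\|_H^2$ and $\|Mu\|_H^2$ via $\Re\langle u,Mu\rangle_{L^2}=\|Mu\|_H^2$, use injectivity of $M$ and its operator norm bound $\|M\|\le 1$, obtain a Lipschitz estimate in $\|u_n-u\|_{L^2}$ for the scalar coefficients, and conclude with the same two-term splitting of the difference. The only (welcome) refinement is that you justify the uniform lower bound on the denominators explicitly, where the paper simply invokes an unspecified constant $m_1$.
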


\newpage

\subsection{Global existence, uniqueness and convergence of the Sobolev gradient}
\label{steepest_descent}

In this section we first show that $z$ as given in equation \eqref{sd} is defined for all $ t \geq 0$ by using Proposition \ref{pu}. Additionally, we prove that $\lim_{t \rightarrow \infty} z(t)$ exists in $H$ if the trapping potential is bounded away from zero by some positive number $\epsilon$. 
\vspace{3mm}
\begin{theorem} \label{thmsdexists}
$z$ as given in \eqref{sd} is uniquely defined for all $t \geq 0$.
\end{theorem}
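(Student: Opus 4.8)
The plan is to read \eqref{sd} as an autonomous ordinary differential equation $z'(t)=F(z(t))$, $z(0)=u_0$, in the Hilbert space $H$, with vector field $F(u)=-P_u\nabla_H E(u)$, and to establish unique global solvability through the standard Picard--Lindel\"of (Cauchy--Lipschitz) theory for ODEs in Banach spaces. First I would prove local existence and uniqueness by showing that $F$ is locally Lipschitz on the open set $\{u\in H: u\neq 0\}$; this immediately yields a unique maximal solution on some interval $[0,T_{\max})$. The remaining task is to rule out finite-time blow-up, that is, to show $T_{\max}=\infty$, which I would do with a priori bounds coming from the two quantities the flow controls: the constraint $\beta$ and the energy $E$.

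For the local Lipschitz estimate I would treat the two factors of $F$ separately and then combine them. Since $E$ is twice continuously Fr\'echet differentiable, $E'$ is locally Lipschitz (its derivative $E''$ is locally bounded), so through the Riesz identity \eqref{sblvgrad} the Sobolev gradient $u\mapsto\nabla_H E(u)$ is locally Lipschitz and locally bounded in $H$. For the projection factor I would invoke Proposition \ref{pu}, which gives $\|P_u-P_v\|\le m\|u-v\|_{L^2}\le m\|u-v\|_H$ near any $u\neq 0$, together with the fact that each $P_u$ is a projection and hence has norm at most one. Writing
\[
\|P_u\nabla_H E(u)-P_v\nabla_H E(v)\|\le\|P_u\|\,\|\nabla_H E(u)-\nabla_H E(v)\|+\|P_u-P_v\|\,\|\nabla_H E(v)\|,
\]
and bounding $\|\nabla_H E(v)\|$ on a neighbourhood, shows $F$ is locally Lipschitz away from the origin, which gives local existence and uniqueness.

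For global existence I would exploit the two conservation/monotonicity facts. The computation already given in the text shows $\beta(z(t))=\beta(u_0)=N$ for all $t$, so the trajectory stays on the sphere $\|z(t)\|_{L^2}^2=N$ and is in particular bounded away from $0$ in $L^2$, keeping it inside the region where $F$ is Lipschitz. Differentiating the energy along the flow and using that $P_{z(t)}$ is a self-adjoint projection gives
\[
\frac{\d}{\d t}E(z(t))=E'(z(t))z'(t)=-\langle P_{z(t)}\nabla_H E(z(t)),\nabla_H E(z(t))\rangle_H=-\|P_{z(t)}\nabla_H E(z(t))\|_H^2\le 0,
\]
so $E(z(t))\le E(u_0)$. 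Because $V_{trap}\ge 0$ and $g\ge 0$, the energy \eqref{gpnondim} dominates $\tfrac12\|\nabla z(t)\|_{L^2}^2$, and combining this with $\|z(t)\|_{L^2}^2=N$ yields the uniform bound $\|z(t)\|_H^2\le N+2E(u_0)$ on $[0,T_{\max})$. Since the solution remains in a bounded subset of $H$ that is bounded away from the origin, the standard continuation criterion forbids $T_{\max}<\infty$, so $z$ extends uniquely to all $t\ge 0$.

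The main obstacle I anticipate is the local Lipschitz step, specifically guaranteeing that the flow never approaches $u=0$, where $P_u$ degenerates. Conservation of $\beta$ handles this in the $L^2$ norm, but one must also check that the denominator $\Re\langle u,Mu\rangle_{L^2}$ appearing in \eqref{eqnpu} stays bounded away from zero along the trajectory, since this is what keeps the Lipschitz constant of $P_u$ uniform. This follows by combining the fixed $L^2$ norm with the energy bound on $\|z(t)\|_H$ through a Cauchy--Schwarz argument in the spectral representation $M=(I+W^*W)^{-1}$ of Theorem \ref{Midentity}, which gives $\Re\langle u,Mu\rangle_{L^2}\ge N^2/\|u\|_H^2$; once this uniform lower bound is in place, everything else is routine ODE theory.
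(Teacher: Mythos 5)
Your proposal is correct, and its first half (Picard--Lindel\"of for the locally Lipschitz field $F(u)=-P_u\nabla_H E(u)$ on $\{u\in H: u\neq 0\}$, built from Proposition \ref{pu}, $\|P_u\|\le 1$, and local Lipschitz continuity of $\nabla_H E$) is the same as the paper's. Where you genuinely diverge is the continuation step. The paper derives no a priori $H$-bound at all: assuming a finite maximal time $T$, it uses the dissipation identity $(E(z))'(t)=-\|P_{z(t)}\nabla_H E(z(t))\|_H^2$ and $E\ge 0$ to get $\int_0^T\|z'\|_H^2\,\d t\le E(z(0))<\infty$, then Cauchy--Schwarz, $\|z(b)-z(a)\|_H^2\le (b-a)\int_a^b\|z'\|_H^2\,\d t$, shows $z(t)$ is Cauchy as $t\to T^-$, so the limit exists and the flow restarts there. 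You instead combine $E(z(t))\le E(u_0)$, conservation of $\beta$, and coercivity ($V_{trap}\ge 0$, $g\ge 0$) to confine the trajectory to the closed set $S=\{u:\|u\|_{L^2}^2=N,\ \|u\|_H^2\le N+2E(u_0)\}$, on which $F$ is bounded and uniformly Lipschitz; your lower bound $\Re\langle u,Mu\rangle_{L^2}=\|Mu\|_H^2\ge N^2/\|u\|_H^2$ is correct and does control the denominator in \eqref{eqnpu}. One caution: state the continuation criterion carefully, since in an infinite-dimensional Hilbert space ``the trajectory stays in a bounded set'' is not by itself sufficient (locally Lipschitz maps need not be bounded on bounded sets); what closes your argument is that $F$ is bounded on $S$, hence $z'$ is bounded, hence $\lim_{t\to T_{\max}^-}z(t)$ exists and lies in $S\subset\{u\neq 0\}$, allowing a restart --- you have all these ingredients, but they, not boundedness of the orbit alone, are what forbid $T_{\max}<\infty$. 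In exchange, your route makes explicit a point the paper glosses over (restarting requires $z(T)\neq 0$, which you get from $\beta(z(T))=N$), and it yields a uniform-in-time $H$-bound and uniform control of $P_{z(t)}$; the paper's dissipation/Cauchy argument is more economical, needing only that $E$ is bounded below and nothing about coercivity on the constraint sphere.
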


\begin{proof}
Since $E$ is continuously twice differentiable, the map $u \rightarrow \nabla_H E(u)$ from $H$ to $H$ is a Lipschitz map.  Furthermore, due to the result obtained in Proposition \ref{pu}, we have that the map $u \rightarrow P_u \nabla_H E(u)$ is also a Lipschitz map.  This implies that we have local existence for equation \eqref{sd}.  Suppose that there exists a number $T$ so that $z$ as defined in equation \eqref{sd} can only be defined for $0 \leq t <T$.  To see that we have global existence, let $0 \leq a < b < T$ and note that 
\begin{eqnarray*}
\|z(b) - z(a) \|_H^2 & \leq & \left(\int_a^b \|z'(t)\|_H \,\d t\right)^2 \\
&=& \left(\int_a^b \|P_{z(t)} \nabla_H E(z(t))\|_H\,\d t\right)^2  \\
&\leq& (b-a) \int_a^b \|P_{z(t)} \nabla_H E(z(t))\|_H^2\, \d t \\
&\leq& T \int_a^b \|P_{z(t)} \nabla_H E(z(t))\|_H^2\, \d t \, .
\end{eqnarray*}  

Furthermore, we have 
\begin{eqnarray*}
\big(E(z)\big)'(t)&=& E'(z(t))z'(t)\\
&=& \langle z'(t) , \nabla_H E(z(t)) \rangle_H \\
&=&-\langle P_{z(t)} \nabla_H E(z(t)), \nabla_H E(z(t)) \rangle_H  \\
&=&-\langle P_{z(t)}^2 \nabla_H E(z(t)), \nabla_H E(z(t)) \rangle_H  \\
&=&-\langle P_{z(t)} \nabla_H E(z(t)), P_{z(t)} \nabla_H E(z(t)) \rangle_H \\
&=&-\| P_{z(t)} \nabla_H E(z(t))\|_H^2 \leq 0 
\end{eqnarray*}
for all $t < T$ which implies that $E(z)$ is decreasing on $[0,T)$. Additionally, $\int_0^T \| P_{z(t)} \nabla_H E(z(t))\|_H^2 \, \d t < \infty $ since
\BEW
\int_a^b \|P_{z(t)} \nabla_H E(z(t))\|_H^2 \, \d t 
&=& E(z(a)) - E(z(b)) \leq E(z(0)) \,.
\EEW

\pagebreak 

Thus $\lim_{t \rightarrow T^-} z(t) = z(T)$ exists and one can therefore define $z$ on $[0,T+ \epsilon)$ for some positive number $\epsilon$, which means that $z(t)$ must exist for all $t \geq 0$. Uniqueness follows from basic existence and uniqueness for ordinary differential equations.
\end{proof}

This global existence and uniqueness result for $z$ as given in equation \eqref{sd} is accompanied by global convergence to an element of the Sobolev space $u \in H$. Our result is an adaptation of what is presented in \cite{jwn} and implies strong convergence in the $H$ norm. This is a stronger result than that of using a minimizing sequence to obtain weak $H$ convergence or strong $L^p$ convergence as has been previously done. For our proof, suppose that the trapping potential $V_{trap}$ is bounded away from zero by some positive number $\epsilon$.  This does not lead to any loss of generality as previously described. Recall that this assumption implies that there is a positive number $\delta$ so that 
\BEW
E''(u)(h,h) \geq \delta \|h\|_H^2 \, .
\EEW


\begin{theorem}
Suppose that $z$ is given by equation \eqref{sd}, with $\nabla_H E(u_0) \neq 0$, then
\BEW
\lim_{ t \rightarrow \infty} z(t) = u
\EEW
exists. Furthermore, there exist constants $m$ and $c$ so that $\|u - z(t)\|_H \leq m e^{-ct}$, and $E'(u)h=0$ for all $h \in null (\beta'(u))$.  

\end{theorem}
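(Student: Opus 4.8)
The plan is to prove three things in sequence: first that $z(t)$ converges to some limit $u$ in $H$, second that the convergence is exponential, and third that the limit satisfies the stationarity condition $E'(u)h=0$ on $null(\beta'(u))$. The natural engine for all three is the uniform strict convexity from Lemma \ref{convexity}, which gives us a positive lower bound $E''(u)(h,h)\ge\delta\|h\|_H^2$ under the standing assumption that $V_{trap}\ge\epsilon>0$. The strategy is to show that the functional value $E(z(t))$ decays exponentially to its infimum and to convert this into exponential decay of $\|P_{z(t)}\nabla_H E(z(t))\|_H$, from which Cauchy convergence of the trajectory follows.

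First I would establish the key differential inequality. From the previous theorem we already know $\big(E(z)\big)'(t)=-\|P_{z(t)}\nabla_H E(z(t))\|_H^2\le 0$, so $E(z(t))$ is decreasing and bounded below (by zero, since $V_{trap}\ge 0$ makes the integrand nonnegative), hence converges to some value $E_\infty$. The heart of the argument is a Polyak--Lojasiewicz-type inequality: uniform convexity should yield a constant $c>0$ with
\BEW
E(z(t)) - E_\infty \;\le\; \frac{1}{2c}\,\|P_{z(t)}\nabla_H E(z(t))\|_H^2 \,.
\EEW
To get this I would expand $E$ to second order along the projected gradient direction and use the $\delta$-lower bound on $E''$ restricted to the nullspace of $\beta'(z(t))$, noting that the projected gradient lives in that nullspace and that the constraint surface $\{\beta=N\}$ is preserved by the flow. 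Combining this with $\big(E(z)\big)'(t)=-\|P_{z(t)}\nabla_H E(z(t))\|_H^2$ gives $\frac{d}{dt}\big(E(z(t))-E_\infty\big)\le -2c\,\big(E(z(t))-E_\infty\big)$, so Gronwall yields $E(z(t))-E_\infty\le\big(E(u_0)-E_\infty\big)e^{-2ct}$. Feeding this back shows $\|P_{z(t)}\nabla_H E(z(t))\|_H\le C e^{-ct}$, and then the Cauchy estimate $\|z(b)-z(a)\|_H\le\int_a^b\|P_{z(t)}\nabla_H E(z(t))\|_H\,\d t$ forces $\{z(t)\}$ to be Cauchy in $H$, so the limit $u=\lim_{t\to\infty}z(t)$ exists with $\|u-z(t)\|_H\le m\,e^{-ct}$.

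For the final claim, that $E'(u)h=0$ for all $h\in null(\beta'(u))$, I would pass to the limit in the stationarity of the projected gradient. Since $z(t)\to u$ in $H$, continuity of $u\mapsto\nabla_H E(u)$ (Lipschitz, as noted) together with the continuity of the projections $P_{z(t)}\to P_u$ from Proposition \ref{pu} gives $P_{z(t)}\nabla_H E(z(t))\to P_u\nabla_H E(u)$ in $H$; but the left side tends to $0$ by the exponential decay just established, so $P_u\nabla_H E(u)=0$. Because $P_u$ is the projection onto $null(\beta'(u))$, this means $\nabla_H E(u)$ is orthogonal in $H$ to $null(\beta'(u))$, which via \eqref{sblvgrad} translates exactly to $E'(u)h=\langle h,\nabla_H E(u)\rangle_H=0$ for every $h\in null(\beta'(u))$.

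The main obstacle I anticipate is the Polyak--Lojasiewicz inequality, specifically justifying the constant $c$ uniformly along the trajectory. The subtlety is that the second-order lower bound on $E''$ holds for the full Hilbert space, but here it must control the gap $E(z(t))-E_\infty$ along the constrained manifold, where the relevant comparison is between $z(t)$ and the constrained minimizer $u$ rather than an unconstrained one. I would handle this by working with the projected gradient directly and using that $P_{z(t)}$ maps into the nullspace where convexity applies, but care is needed because the nullspace of $\beta'$ changes with $t$; Proposition \ref{pu} guarantees these projections vary Lipschitz-continuously, which is precisely what keeps the constant from degenerating. A secondary technical point is confirming that $E_\infty$ equals $E(u)$ and that $u$ is the genuine constrained minimizer, which follows once stationarity is combined with global convexity.
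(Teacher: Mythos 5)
Your overall scaffolding (energy decay, exponential estimate, Cauchy trajectory, then passing to the limit in the projected gradient to get $P_u\nabla_H E(u)=0$ and hence $E'(u)h=0$ on $null(\beta'(u))$) matches the paper's, and your final stationarity step is essentially the paper's verbatim. But the engine you propose --- the Polyak--Lojasiewicz inequality $E(z(t))-E_\infty \le \tfrac{1}{2c}\|P_{z(t)}\nabla_H E(z(t))\|_H^2$ derived from Lemma \ref{convexity} --- has a genuine gap, and it is exactly the one you flag without closing. The constraint set $\{\beta=N\}$ is a sphere in $L^2$, not a convex set, so the standard "strong convexity implies PL" argument (expand $E$ along the segment from $z(t)$ to the minimizer) breaks down: the segment leaves the constraint manifold. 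Concretely, using \eqref{eqnpu} and $\beta(u)=\beta(z(t))$, the normal component of the displacement is $(I-P_{z(t)})(u-z(t)) = -\frac{\|u-z(t)\|_{L^2}^2}{2\Re\langle z(t),Mz(t)\rangle_{L^2}}\,Mz(t)$, which is quadratic in the distance and pairs in the convexity expansion with the \emph{unprojected} gradient $\nabla_H E(z(t))$, i.e.\ with the Lagrange-multiplier (chemical-potential) component, which does not tend to zero. This cross term is of the same order as the convexity term $\tfrac{\delta}{2}\|u-z(t)\|_H^2$ and enters with the harmful sign, so positivity of $E''$ does not give positivity of the effective constrained Hessian $E''-\lambda\beta''$; Lipschitz continuity of $u\mapsto P_u$ (Proposition \ref{pu}) says nothing about this term. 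Worse, as a statement relative to the constrained minimum your PL inequality is simply false whenever the constrained problem has critical points above the minimum (excited states of the stationary Gross--Pitaevskii equation), which convexity of $E$ on $H$ does not exclude since the sphere is non-convex; proving PL merely along the trajectory would require already knowing the trajectory avoids such saddles, which is circular. For the same reason your closing claim that the limit must be the genuine constrained minimizer "by global convexity" is unjustified; note the theorem itself only asserts that the limit is a constrained critical point.

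The paper's proof avoids any comparison with the limit point or limiting energy. It sets $g(t)=E(z(t))$, computes $g''(t) = -2\langle G'(z(t))z'(t),\, G(z(t))\rangle_H$ for $G(x)=P_x\nabla_H E(x)$, and exploits the fact that the velocity $z'(t)=-G(z(t))$ lies in $null(\beta'(z(t)))$ at every instant, so Lemma \ref{convexity} applies to it: $g''(t)/2 = E''(z(t))(z'(t),z'(t)) \ge \delta\|z'(t)\|_H^2 = -\delta g'(t)$. Integrating $-g''/g' \ge 2\delta$ yields the pointwise exponential decay $0 \le -g'(t) \le m e^{-2\delta t}$ directly, and the Cauchy property and rate follow by Cauchy--Schwarz on unit intervals. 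That pointwise decay is also something your route cannot deliver even if PL were granted: your "feeding back" step is a non sequitur, since exponential decay of $E(z(t))-E_\infty$ does not imply pointwise decay of its derivative $-\|P_{z(t)}\nabla_H E(z(t))\|_H^2$. That particular flaw is repairable --- PL plus Gronwall gives the tail bound $\int_t^\infty\|z'\|_H^2\,\d s = E(z(t))-E_\infty \le Ce^{-2ct}$, and Cauchy--Schwarz on $[n,n+1]$ then gives the Cauchy estimate and exponential rate, the same device the paper uses --- but the PL inequality itself is not, so the proposal as it stands does not constitute a proof.
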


\begin{proof}
Let $g(t)=E(z(t))$ then we know already that 
\begin{eqnarray*}
g'(t) & = & - \| P_{z(t)} \nabla_H E(z(t)) \|_H^2 \, .
\end{eqnarray*}
Moreover, let $G: H \rightarrow H$ be defined by $G(x)=P_x \nabla_H E(x)$ then $G$ is $C^1$ and thus
\BEW
g''(t)=  -2 \langle G'(z(t))z'(t), G(z(t)) \rangle_H \, . 
\EEW
Now note that if $h \in null (\beta'(z(t))$, then 
\begin{eqnarray*}
E'(z(t))(h) &=& \langle h , \nabla_H E(z(t)) \rangle_H = \langle P_{z(t)} h , \nabla_H E(z(t)) \rangle_H = \langle h , G(z(t)) \rangle_H 
\end{eqnarray*}
and therefore
\BEW
E''(z(t))(h,z'(t)) = \langle h , G'(z(t))z'(t) \rangle_H \, .
\EEW
Since 
\BEW
z'(t)= -P_{z(t)} \nabla_H E(z(t)) = - G(z(t))
\EEW
is in the nullspace of $\beta'(z(t))$, one has
\begin{eqnarray*}
E''(z(t))(z'(t),z'(t)) &=& \langle z'(t) , G'(z(t))z'(t) \rangle_H \\
& = & \langle -G(z(t)) , G'(z(t))z'(t) \rangle_H =  \frac{g''(t)}{2} \, .
\end{eqnarray*}
Making use of Lemma \ref{convexity} we consequently have
\BEW
\frac{g''(t)}{2} &\geq & \delta \|z'(t)\|_H^2 = \delta \|-G(z(t))\|_H^2 = \delta \| G(z(t)) \|_H^2 = - \delta g'(t)
\EEW
and hence
\BEW
- \frac{g''(t)}{g'(t)} \geq 2 \delta \, .
\EEW
Integrating both sides from 0 to $t$, we get that
\BEW
-\ln(-g'(t)) + \ln(-g'(0)) \geq 2 \delta t
\EEW
and thus there is a constant $m$ so that 
\BEW
0 \leq - g'(t) \leq m e^{-2\delta t} \quad \text{ for all $t$.}
\EEW
More specifically $m= \|P_{z(0)} \nabla_H E(z(0))\|_H^2$. From this we see that
\BEW
\left(\int_n^{n+1} \|z'\|_H \,\d t\right)^2 &\leq& \int_n^{n+1} \|z'\|_H^2 \,\d t = - \int_n^{n+1}  g'\,\d t \leq  m \int_n^{n+1}  e^{-2\delta t} \d t  
\EEW
and therefore
\BEW
\int_0^{\infty} \|z'\|_H \,\d t < \infty
\EEW
which implies that $\lim_{t \rightarrow \infty} z(t)=u$ exists. The rate of convergence is given by 
\BEW
\|u - z(n)\|_H \leq \left( \frac{\sqrt{m}}{\sqrt{2 \delta}(1 - e^{- \delta})} \right) e^{- \delta n} \, .
\EEW
 
To see that $E'(u)h=0$ for all $h \in null(\beta'(u))$, recall that by Proposition \ref{pu}, the map $u \rightarrow P_u$ from $H$ to $L(H,H)$ is Lipschitz. 

Furthermore since $E$ is $C^2$, the map $ u \rightarrow \nabla_H E(u)$ is Lipschitz also from $H$ to $H$. Thus the map $u \rightarrow P_u \nabla_H E(u)$ is Lipschitz and from this it follows that
\BEW
P_u \nabla_H E(u) = \lim_{ t \rightarrow \infty} P_{z(t)} \nabla_{H} E(z(t)).
\EEW
Since
\BEW
\int_0^{\infty} \|z'\|_H \,\d t = \int_0^{\infty} \|P_{z(t)} \nabla_{H} E(z(t))\|_H \,\d t
\EEW
is finite, then it must be that $P_u \nabla_H E(u)=0$.  Now let $h \in null(\beta'(u))$ then
\BEW
E'(u)h= \langle h , \nabla_H E(u) \rangle_H = \langle P_u h , \nabla_H E(u) \rangle_H = \langle h , P_u \nabla_H E(u) \rangle_H = 0 \, ,
\EEW
which concludes the proof.

\end{proof}

\newpage

\section{Numerical results} 

In the second part of our contribution we want to apply the results that we obtained in the detailed discussion about the theoretical background of the Sobolev method and its application to calculate the ground state of the Gross-Pitaevskii energy functional. In particular we want to demonstrate the extraordinary convergence properties of this method and highlight its numerical efficiency. For this purpose we do not use the commonly known harmonic oscillator, but the more challenging generalized Mexican hat potential as our trapping potential which in natural units reads
\BEW
V_{trap}(\x) & = & A\left( \sum\limits_{i=1}^d (C_i x_i)^2 - B \right)^2
\EEW
for the $d$-dimensional case. For our numerical simulations we choose $A=\frac{1}{10}$, $B=16$ and -- to break the spherical symmetry -- $C_1 = 1.0$, $C_2 = 1.5$ and $C_3=2.0$.

Furthermore, we vary the coupling constant $g$ from $10^{-1}$ over $10^{0}$ up to $10^{1}$ such that we cover the regime from a nearly interactionless gas of bosons to one which is interaction dominated. For $g=10^{-1}$ the bosons are only weakly interacting, whereas they are strongly interacting for $g=10^{1}$. For a numerical realization of the minimization we also need the particle number $N$ which determines the normalization constraint. The particle numbers we choose for our illustration purposes are $N=10^2$ for the one-dimensional case, $N=10^3$ in two dimensions and $N=10^4$ in the three-dimensional case.

Before we discuss our results, a few words are in place about our numerical implementation of the method of Sobolev gradients. The details about our implementation are summarized in Appendix C and we briefly want to mention that we use a discretization in position space and an implementation of the differentiation by means of central differencing \cite{Renka_JWN}. Furthermore, we use an Euler iteration to discretize equation \eqref{sd} and optimize the step size for each iteration. Once the relative change in energy from one iteration step to the next is small enough -- in our case this means less than $10^{-4}$ for 1D, $10^{-3}$ for 2D and $10^{-4}$ for 3D due to the specific choice of the potential and the respective particle numbers -- we switch to the well-known Newton method in order to find the exact solution of the stationary Gross-Pitaevskii equation
\BEW
- \frac{1}{2} \nabla^2 u(\x) + V_{trap}(\x) u(\x) + g |u(\x)|^2u(\x) = \mu u(\x) \, ,
\EEW
where $\mu$ denotes the so-called chemical potential. 

The ground state of the Gross-Pitaevskii energy functional has to fulfill this equation and only few steps with the Newton method are necessary to arrive at a desired accuracy of $10^{-8}$ in each component. 

To clearly demonstrate the superior convergence properties of our numerical simulations with the Sobolev gradient, we choose the initial wave function to be a real-valued random field with zero boundary conditions which on average is as far as possible from the desired solution. All simulations were run on a \textit{Dell PowerEdge 2900 Server} with four Intel Xeon 5160 CPUs at a frequency of 3.00 GHz and a total of 16 GB of RAM. The MATLAB program for our simulations is a 64-bit Linux version 7.5.0.338 (R2007b).

\subsection{Simulations in 1D}

\begin{table}[h]
\label{comp_1d}

\begin{center}
\begin{tabular}{C{0.6cm}|C{0.8cm}||C{1.1cm}|C{1.15cm}|C{1.15cm}|C{1.15cm}|C{1.15cm}|C{1.4cm}|C{1.4cm}}
\hline\hline 
$\bm{N_x}$ & $\bm{g}$ & $\bm{\mu}$ & $\bm{\#_{\text{S,Min}}}$ & $\bm{\#_{\text{S,Max}}}$ & $\bm{\#_{\text{N,Min}}}$ & $\bm{\#_{\text{N,Max}}}$ & $\bm{t_{\text{Min}}}$ & $\bm{t_{\text{Max}}}$ \\ \hline\hline 
&&&&&&&\\[-6mm]
& $10^{-1}$ & $4.8938$ & $130$ & $241$ & $17$ & $18$ & $5.2 \cdot 10^{-1}$ & $8.6 \cdot 10^{-1}$ \\\cline{2-9}
&&&&&&&\\[-6mm]
$2^{7\phantom{0}}$ & $10^{0\phantom{-}}$ & $19.831$ & $98$ & $123$ & $21$ & $23$ & $4.4 \cdot 10^{-1}$ & $5.9 \cdot 10^{-1}$ \\\cline{2-9}
&&&&&&&\\[-6mm]
& $10^{1\phantom{-}}$ & $92.607$ & $54$ & $86$ & $19$ & $23$ & $4.2 \cdot 10^{-1}$ & $5.7 \cdot 10^{-1}$\\\hline\hline
&&&&&&&\\[-6mm]
& $10^{-1}$ & $4.8952$ & $156$ & $278$ & $18$ & $19$ & $6.6 \cdot 10^{-1}$ & $1.1 \cdot 10^{0\phantom{-}}$ \\\cline{2-9}
&&&&&&&\\[-6mm]
$2^{8\phantom{0}}$ & $10^{0\phantom{-}}$ & $19.831$ & $109$ & $179$ & $20$ & $23$ & $5.0 \cdot 10^{-1}$ & $7.4 \cdot 10^{-1}$ \\\cline{2-9}
&&&&&&&\\[-6mm]
& $10^{1\phantom{-}}$ & $92.607$ & $80$ & $121$ & $20$ & $20$ & $4.1 \cdot 10^{-1}$ & $5.5 \cdot 10^{-1}$\\\hline\hline
&&&&&&&\\[-6mm]
& $10^{-1}$ & $4.8956$ & $153$ & $309$ & $18$ & $20$ & $7.6 \cdot 10^{-1}$ & $1.4 \cdot 10^{0\phantom{-}}$ \\\cline{2-9}
&&&&&&&\\[-6mm]
$2^{9\phantom{0}}$ & $10^{0\phantom{-}}$ & $19.832$ & $124$ & $208$ & $20$ & $23$ & $6.5 \cdot 10^{-1}$ & $1.0 \cdot 10^{0\phantom{-}}$ \\\cline{2-9}
&&&&&&&\\[-6mm]
& $10^{1\phantom{-}}$ & $92.608$ & $93$ & $147$ & $21$ & $21$ & $5.3 \cdot 10^{-1}$ & $7.6 \cdot 10^{-1}$ \\\hline\hline
\end{tabular}
\end{center}
\vspace{2mm}
\caption{Simulation results with the {Sobolev} gradient for the ground state of the 1D {Gross-Pitaevskii} energy functional with a Mexican hat potential.}
\end{table}

In table \ref{comp_1d} we see the results of the one-dimensional simulations for a particle number of $N=10^2$ and a grid length of $L_x=10$. For various discretization numbers $N_x$ and coupling strengths $g$ we run our simulations to obtain the ground state for 10 different random initial fields. We show the respective chemical potential $\mu$ in dimensionless units, the minimal and maximal number of iteration steps with the Sobolev gradient $\#_{\text{S,Min/Max}}$, the minimal and maximal number of iteration steps with the Newton method $\#_{\text{N,Min/Max}}$ and the minimal and maximal CPU time $t_{\text{Min/Max}}$ for the simulation in seconds, which was obtained in our MATLAB simulation with the built-in functions \textit{tic} and \textit{toc}. 

We can clearly observe the dependence of the number of iteration steps with the Sobolev gradient on the different random initial fields. In general the number of necessary steps decreases for an increasing coupling constant. In contrast, the number of iterations with the Newton method is almost independent of the random initial field and shows a negligible dependence on the coupling constant. Furthermore, it is worth mentioning that the simulation time is independent of the number of grid points since the complexity of the minimization in one dimension is fairly low and thus the computational overhead is dominant.

The shape of the density distribution $n_x = |u(x)|^2$ is depicted in figure \ref{wave_1D_comp} for $N_x=2^{9}$ and interaction strengths of $g=10^{-1}$, $10^{0}$ and $10^1$. We can notice the change from a clear double Gaussian shape for a small interaction strength towards a strongly interaction broadened shape at a large interaction strength.

\begin{figure}
\centering
\includegraphics[width=0.29\columnwidth]{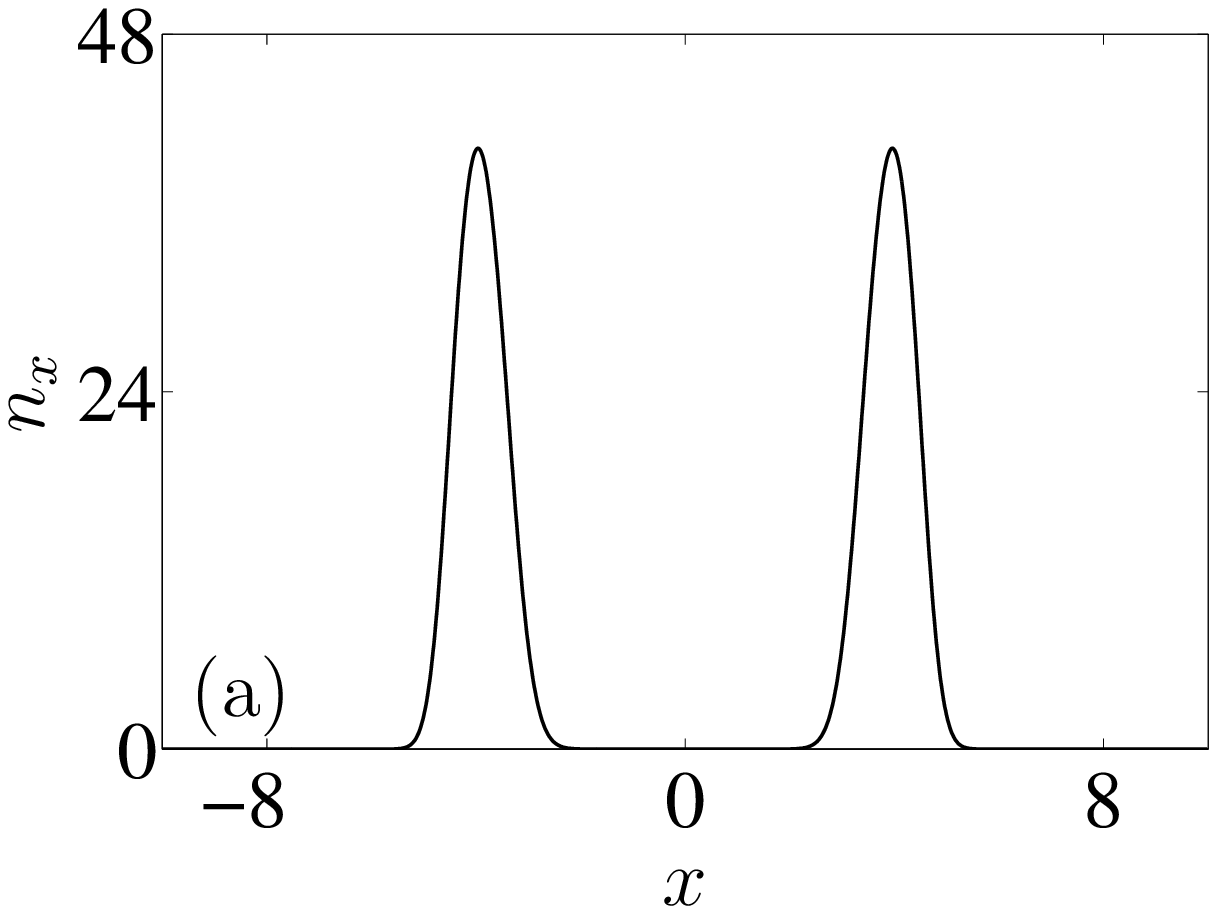}
\includegraphics[width=0.29\columnwidth]{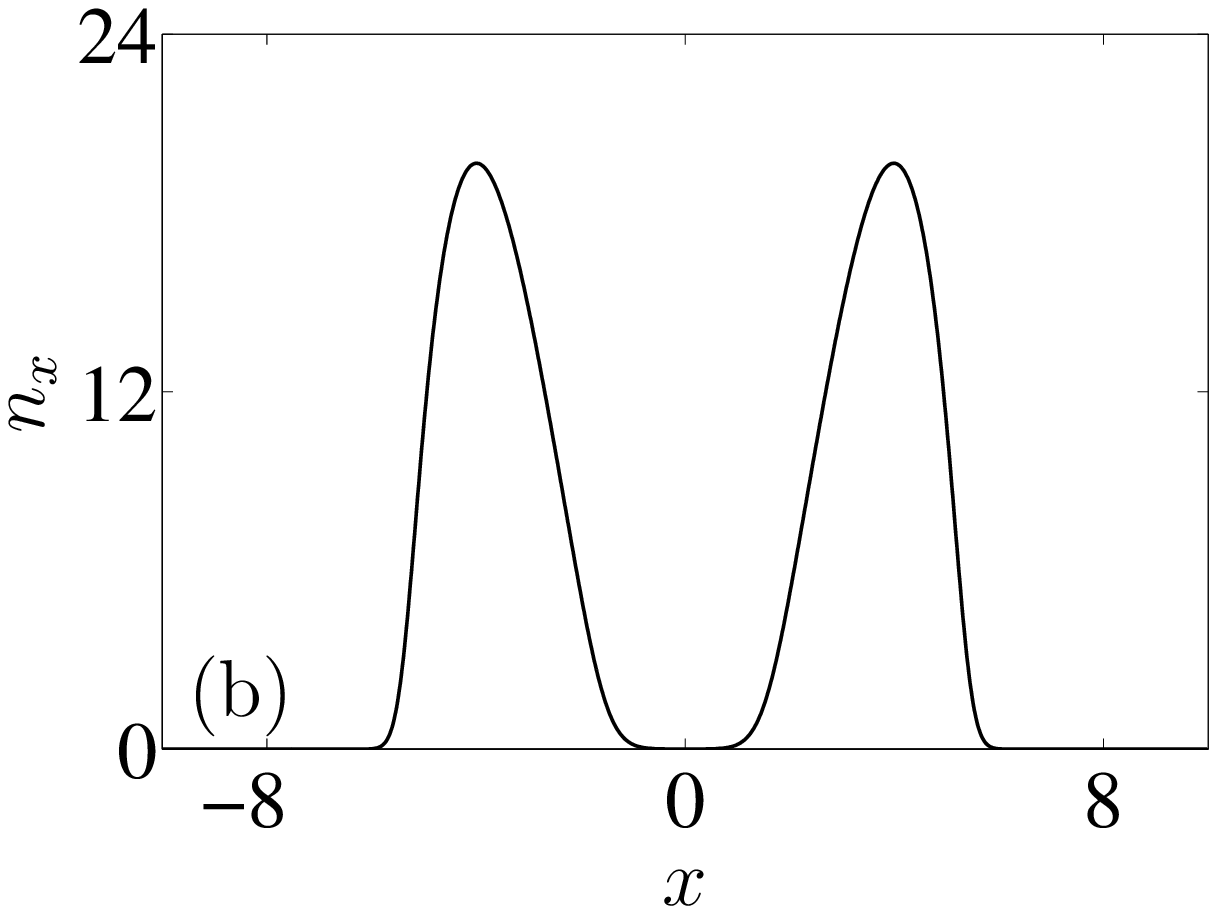}
\includegraphics[width=0.29\columnwidth]{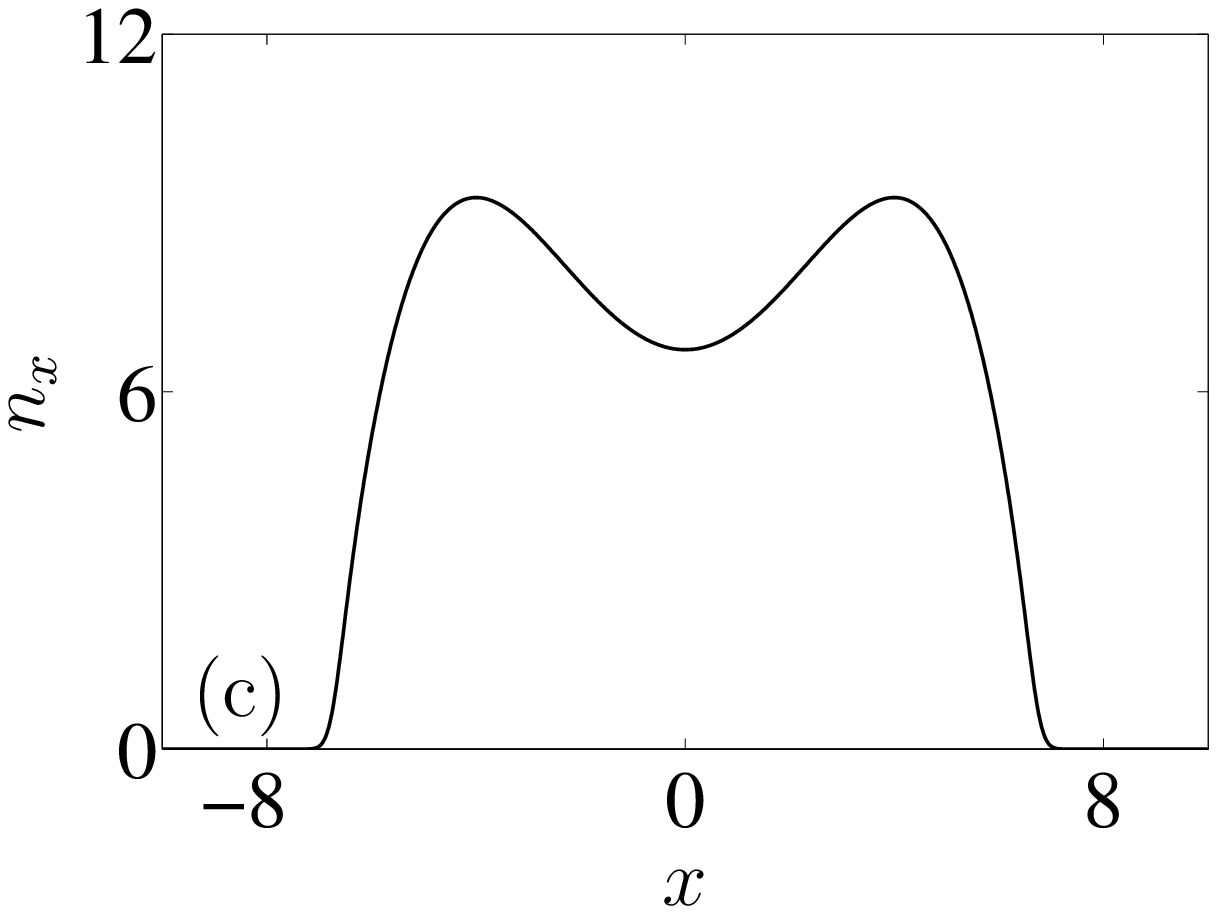}
\caption{Density distribution $n_x = |u(x)|^2$ of the ground state of the 1D {Gross-Pitaevskii} energy functional with a Mexican hat potential for a particle number of $N=10^2$. The number of grid points for the simulation is $N_x=2^{9}$ and the grid size is $L_x=10$. The interaction strength is $g=10^{-1}$ in subplot (a), $g=10^{0}$ in subplot (b) and $g=10^{1}$ in subplot (c).}
\label{wave_1D_comp}
\end{figure}

\subsection{Simulations in 2D}

In table \ref{comp_2d} we compare the parameters for the two-dimensional simulation for a particle number of $N=10^3$ and a grid length of $L_x=L_y=10$. The discretization for both spatial parameters $x$ and $y$ was chosen to be equal by setting $N_x=N_y$. As previously, we show the respective chemical potential $\mu$ in dimensionless units, the minimal and maximal number of iteration steps with the Sobolev gradient $\#_{\text{S,Min/Max}}$, the minimal and maximal number of iteration steps with the Newton method $\#_{\text{N,Min/Max}}$ and the minimal and maximal CPU time $t_{\text{Min/Max}}$ for the simulation in seconds.


\begin{table}[h]
\label{comp_2d}

\begin{center}
\begin{tabular}{C{0.6cm}|C{0.8cm}||C{1.1cm}|C{1.15cm}|C{1.15cm}|C{1.15cm}|C{1.15cm}|C{1.4cm}|C{1.4cm}}
\hline\hline 
$\bm{N_x}$ & $\bm{g}$ & $\bm{\mu}$ & $\bm{\#_{\text{S,Min}}}$ & $\bm{\#_{\text{S,Max}}}$ & $\bm{\#_{\text{N,Min}}}$ & $\bm{\#_{\text{N,Max}}}$ & $\bm{t_{\text{Min}}}$ & $\bm{t_{\text{Max}}}$ \\ \hline\hline 
&&&&&&&\\[-6mm]
& $10^{-1}$ & $5.6656$ & $350$ & $374$ & $21$ & $23$ & $5.7 \cdot 10^{0}$ & $9.9 \cdot 10^{0}$ \\\cline{2-9}
&&&&&&&\\[-6mm]
$2^{6}$ & $10^{0\phantom{-}}$ & $23.557$ & $179$ & $190$ & $23$ & $26$ & $5.1 \cdot 10^{0}$ & $5.4 \cdot 10^{0}$ \\\cline{2-9}
&&&&&&&\\[-6mm]
& $10^{1\phantom{-}}$ & $124.44$ & $87$ & $103$ & $25$ & $27$ & $4.4 \cdot 10^{0}$ & $5.3 \cdot 10^{0}$\\\hline\hline
&&&&&&&\\[-6mm]
& $10^{-1}$ & $5.6836$ & $427$ & $451$ & $20$ & $21$ & $1.1 \cdot 10^{2}$ & $1.5 \cdot 10^{2}$ \\\cline{2-9}
&&&&&&&\\[-6mm]
$2^{7}$ & $10^{0\phantom{-}}$ & $23.564$ & $152$ & $243$ & $23$ & $23$ & $4.6 \cdot 10^{1}$ & $6.1 \cdot 10^{1}$ \\\cline{2-9}
&&&&&&&\\[-6mm]
& $10^{1\phantom{-}}$ & $124.44$ & $93$ & $176$ & $25$ & $30$ & $3.5 \cdot 10^{1}$ & $6.3 \cdot 10^{1}$ \\\hline\hline
&&&&&&&\\[-6mm]
& $10^{-1}$ & $5.6883$ & $615$ & $990$ & $24$ & $24$ & $9.2 \cdot 10^{2}$ & $2.3 \cdot 10^{3}$ \\\cline{2-9}
&&&&&&&\\[-6mm]
$2^{8}$ & $10^{0\phantom{-}}$ & $23.566$ & $324$ & $626$ & $22$ & $24$ & $6.0 \cdot 10^{2}$ & $1.1 \cdot 10^{3}$ \\\cline{2-9}
&&&&&&&\\[-6mm]
& $10^{1\phantom{-}}$ & $124.44$ & $142$ & $236$ & $26$ & $28$ & $4.6 \cdot 10^{2}$ & $6.7 \cdot 10^{2}$\\
\hline\hline
\end{tabular}
\end{center}
\vspace{2mm}
\caption{Simulation results with the {Sobolev} gradient for the ground state of the 2D {Gross-Pitaevskii} energy functional with a Mexican hat potential.}
\end{table}

For the two-dimensional case the number of iteration steps with the Sobolev gradient also depends on the different random initial fields as was the case for the one-dimensional simulations. The number of iteration steps with the Newton method has again a negligible dependence on the different random initial fields. Overall, the number of necessary steps has a strong dependence on the coupling constant and is significantly lower for a large coupling constant, which is due to a much smoother behavior of the respective ground state wave function. Now the simulation time depends on the number of grid points since the complexity of the minimization in two dimensions is rapidly growing. Combining the increasing number of iteration steps for twice as many grid points, the simulation time approximately grows by a factor of four which is exactly what one would expect since the complexity of a two dimensional system also grows by this factor when the number of grid points is doubled. Nevertheless, the simulation time is on the order of several minutes despite a maximum system size of $2^{16}$ grid points. This clearly demonstrates the efficiency of the method of Sobolev gradients.

To illustrate the shape of the density distribution $n_\x = |\alpha_\x|^2$, we depict in figure \ref{wave_2D_comp} cuts along the $x$-axis for $y=0$ (black lines) and along the $y$-axis for $x=0$ (blue lines). For these plots we used a discretization number of $N_x=2^{8}$ and interaction strengths of $g=10^{-1}$, $10^{0}$ and $10^1$. Once more we can notice the change from a clear double Gaussian shape for a small interaction strength towards a strongly interaction broadened shape at a large interaction strength. Since we deal with an anisotropic trapping potential the width in the $y$-direction is smaller than in the $x$-direction.

\begin{figure}
\centering
\includegraphics[width=0.30\columnwidth]{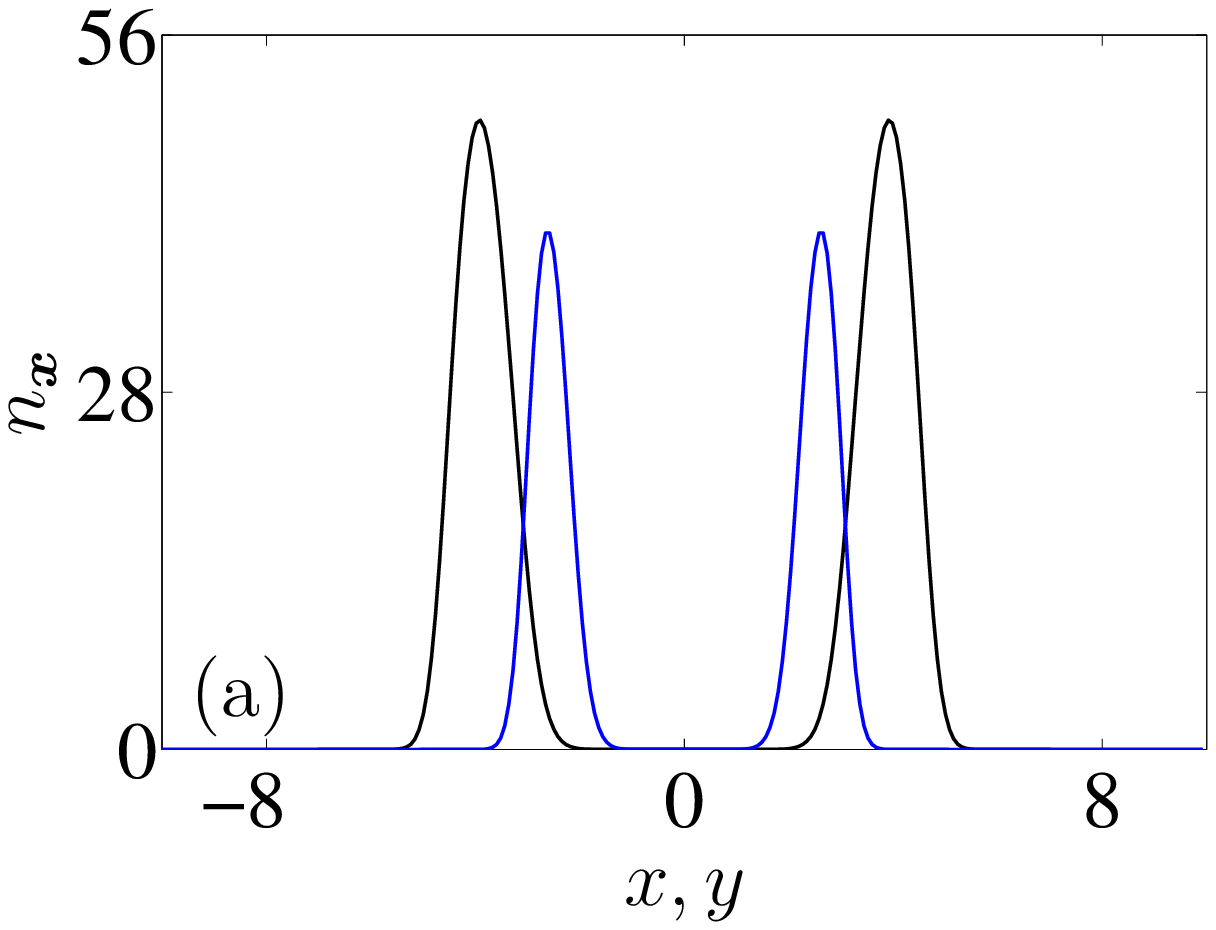}
\includegraphics[width=0.30\columnwidth]{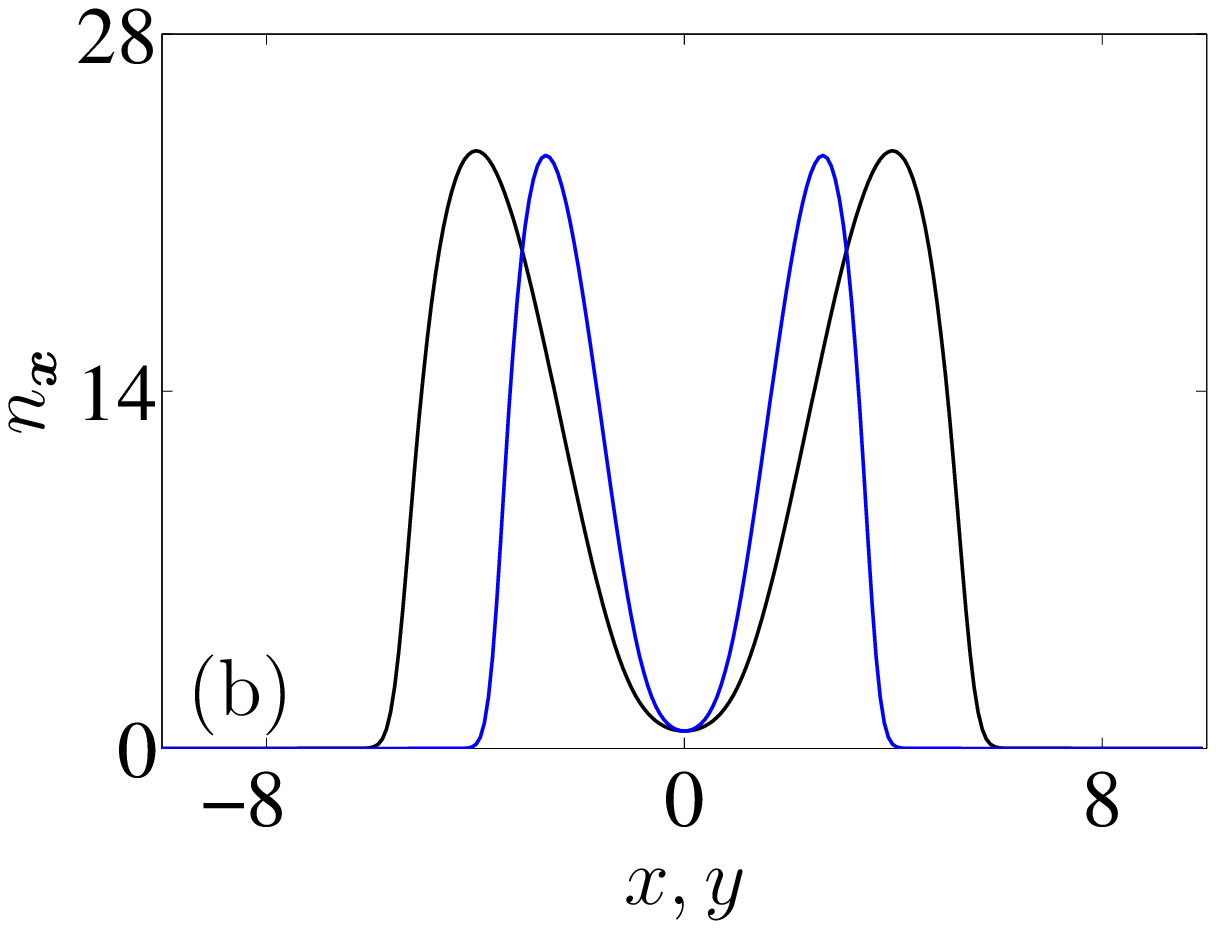}
\includegraphics[width=0.30\columnwidth]{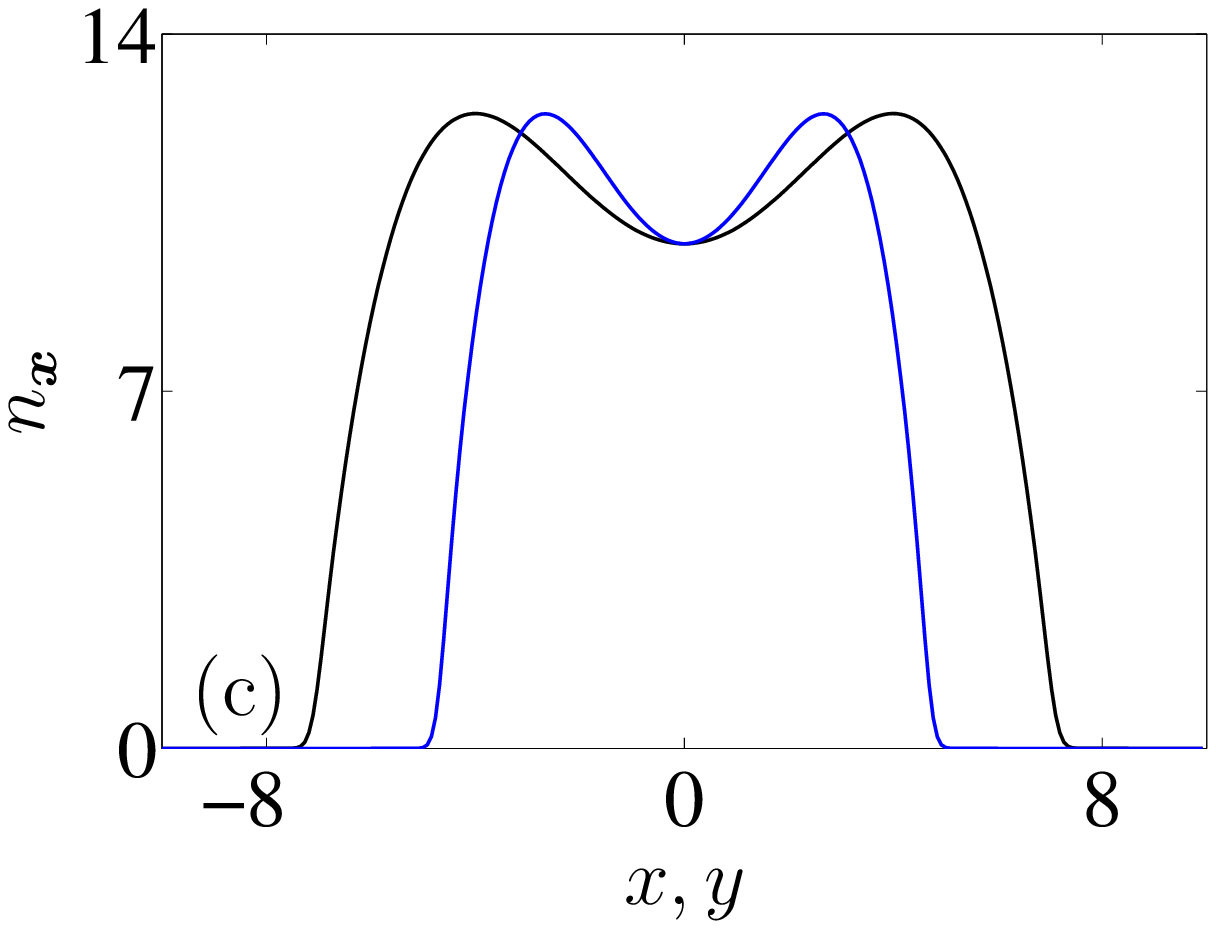}
\caption{Cuts through the density distribution $n_\x = |u(\x)|^2$ of the ground state of the 2D {Gross-Pitaevskii} energy functional with a Mexican hat potential. The cuts are along the $x$-axis for $y=0$ (black lines) and along the $y$-axis for $x=0$ (blue lines). The results are for a particle number of $N=10^3$, the number of grid points for the simulation is $N_x=N_y=2^{8}$ and the grid size is $L_x=L_y=10$. The interaction strength is $g=10^{-1}$ in subplot (a), $g=10^{0}$ in subplot (b) and $g=10^{1}$ in subplot (c).}
\label{wave_2D_comp}
\end{figure}

\subsection{Simulations in 3D}

\begin{table}[h]
\label{comp_3d}

\begin{center}
\begin{tabular}{C{0.6cm}|C{0.8cm}||C{1.1cm}|C{1.15cm}|C{1.15cm}|C{1.15cm}|C{1.15cm}|C{1.4cm}|C{1.4cm}}
\hline\hline 
$\bm{N_x}$ & $\bm{g}$ & $\bm{\mu}$ & $\bm{\#_{\text{S,Min}}}$ & $\bm{\#_{\text{S,Max}}}$ & $\bm{\#_{\text{N,Min}}}$ & $\bm{\#_{\text{N,Max}}}$ & $\bm{t_{\text{Min}}}$ & $\bm{t_{\text{Max}}}$ \\ \hline\hline 
&&&&&&&\\[-6mm]
& $10^{-1}$ & $9.8716$ & $767$ & $955$ & $24$ & $28$ & $1.2 \cdot 10^{3}$ & $2.5 \cdot 10^{3}$ \\\cline{2-9}
&&&&&&&\\[-6mm]
$2^{5}$ & $10^{0\phantom{-}}$ & $45.302$ & $297$ & $341$ & $23$ & $28$ & $5.1 \cdot 10^{2}$ & $8.2 \cdot 10^{2}$ \\\cline{2-9}
&&&&&&&\\[-6mm]
& $10^{1\phantom{-}}$ & $218.23$ & $133$ & $152$ & $25$ & $29$ & $2.5 \cdot 10^{2}$ & $4.8 \cdot 10^{2}$ \\\hline\hline
&&&&&&&\\[-6mm]
& $10^{-1}$ & $9.9784$ & $755$ & $987$ & $24$ & $29$ & $1.7 \cdot 10^{4}$ & $2.4 \cdot 10^{4}$ \\\cline{2-9}
&&&&&&&\\[-6mm]
$2^{6}$ & $10^{0\phantom{-}}$ & $45.414$ & $354$ & $421$ & $22$ & $23$ & $7.8 \cdot 10^{3}$ & $1.1 \cdot 10^{4}$ \\\cline{2-9}
&&&&&&&\\[-6mm]
& $10^{1\phantom{-}}$ & $218.18$ & $199$ & $215$ & $25$ & $29$ & $2.9 \cdot 10^{3}$ & $5.7 \cdot 10^{3}$ \\\hline\hline
&&&&&&&\\[-6mm]
& $10^{-1}$ & $10.011$ & $1078$ & $1485$ & $23$ & $25$ & $1.5 \cdot 10^{5}$ & $2.3 \cdot 10^{5}$ \\\cline{2-9}
&&&&&&&\\[-6mm]
$2^{7}$ & $10^{0\phantom{-}}$ & $45.443$ & $640$ & $769$ & $23$ & $23$ & $7.8 \cdot 10^{4}$ & $1.4 \cdot 10^{5}$ \\\cline{2-9}
&&&&&&&\\[-6mm]
& $10^{1\phantom{-}}$ & $218.21$ & $371$ & $466$ & $26$ & $33$ & $5.0 \cdot 10^{4}$ & $6.9 \cdot 10^{4}$ \\
\hline\hline
\end{tabular}
\end{center}
\vspace{2mm}
\caption{Simulation results with the {Sobolev} gradient for the ground state of the 3D {Gross-Pitaevskii} energy functional with a Mexican hat potential.}
\end{table}


Eventually we compare the parameters for the three-dimensional simulation in table \ref{comp_3d}. We used a particle number of $N= 10^4$ and a grid length of $L_x=L_y=L_z=10$. The discretization for all spatial parameters is the same and thus $N_x=N_y=N_z$. Again we show the respective chemical potential $\mu$ in dimensionless units, the minimal and maximal number of iteration steps with the Sobolev gradient $\#_{\text{S,Min/Max}}$ and with the Newton method $\#_{\text{N,Min/Max}}$ and the minimal and maximal CPU time $t_{\text{Min/Max}}$ for the simulation in seconds.

As before the number of iteration steps with the Sobolev gradient depends on the different random initial fields whereas they do not influence the number of iteration steps with the Newton method. The number of necessary steps also strongly depends on the coupling constant and is lower for a large coupling constant, where the respective ground state wave function has a smoother behavior. The simulation time is highly dependent on the number of grid points due to the complexity of the three dimensional simulations. The simulation time is now on the order of one day for a maximum system size of $2^{21}$ grid points. However we want to point out, that this is not a standard computation time for a given potential but represents the upper limit for a ground state simulation with our method, since we started as far as possible from the final solution and with the most unfavorable initial state.

\begin{figure}
\centering
\includegraphics[width=0.30\columnwidth]{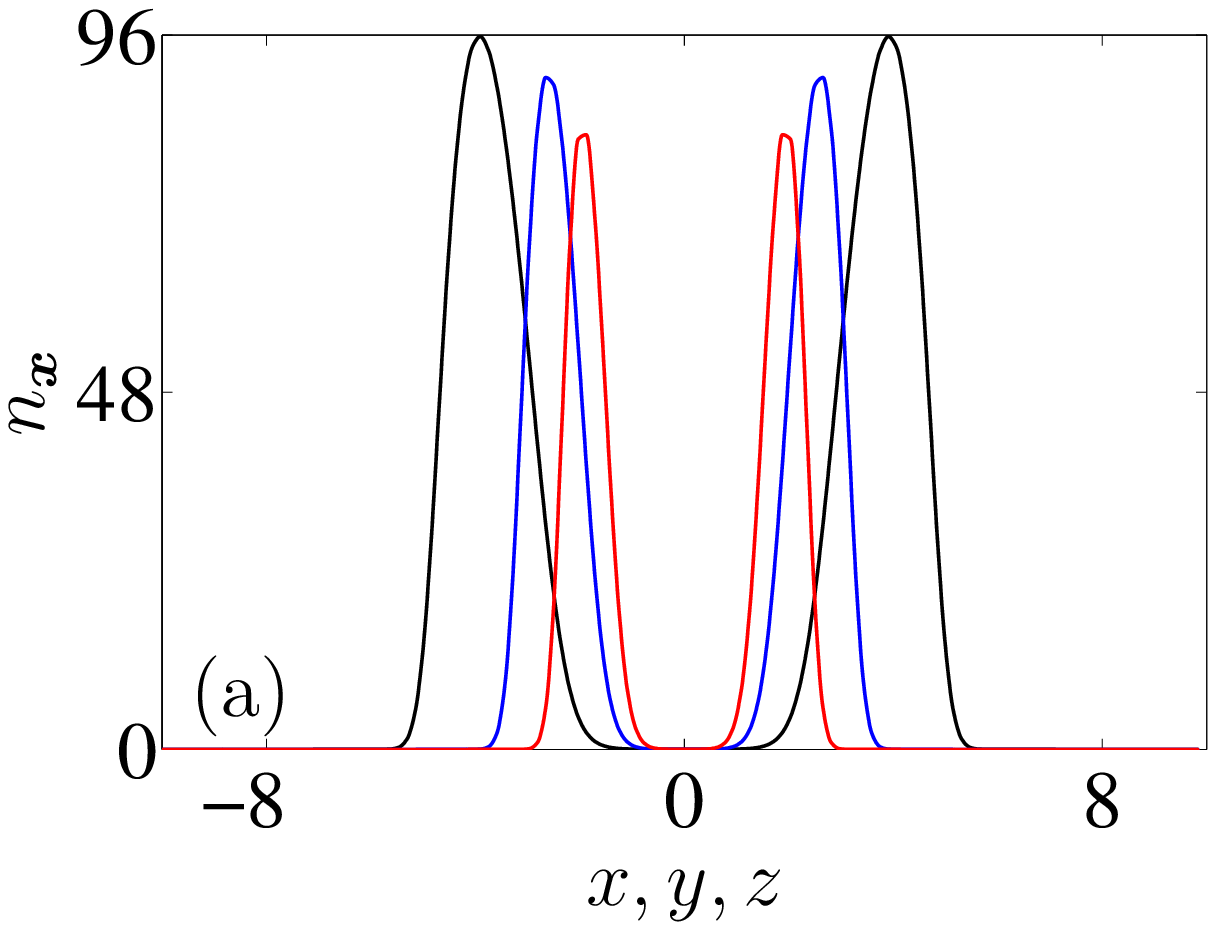}
\includegraphics[width=0.30\columnwidth]{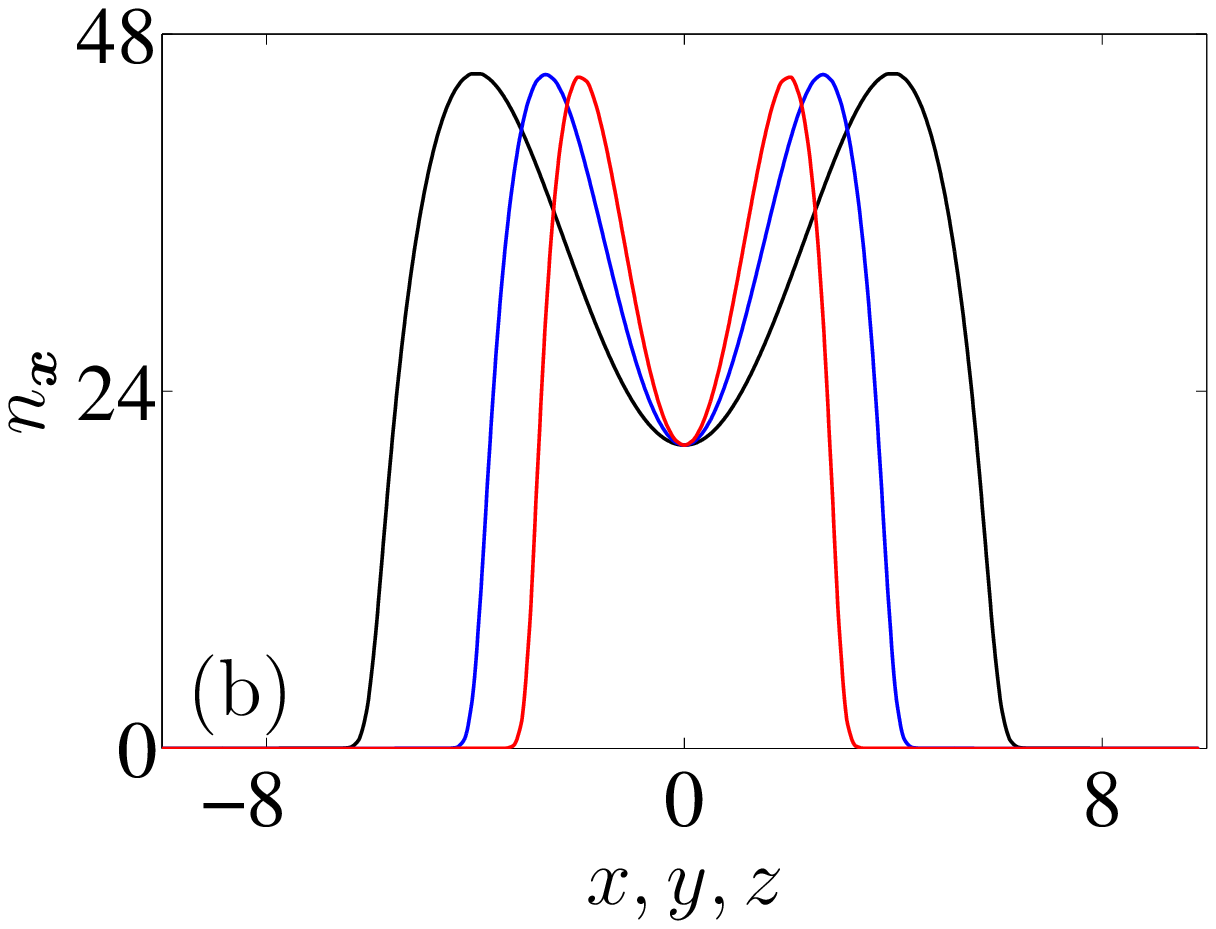}
\includegraphics[width=0.30\columnwidth]{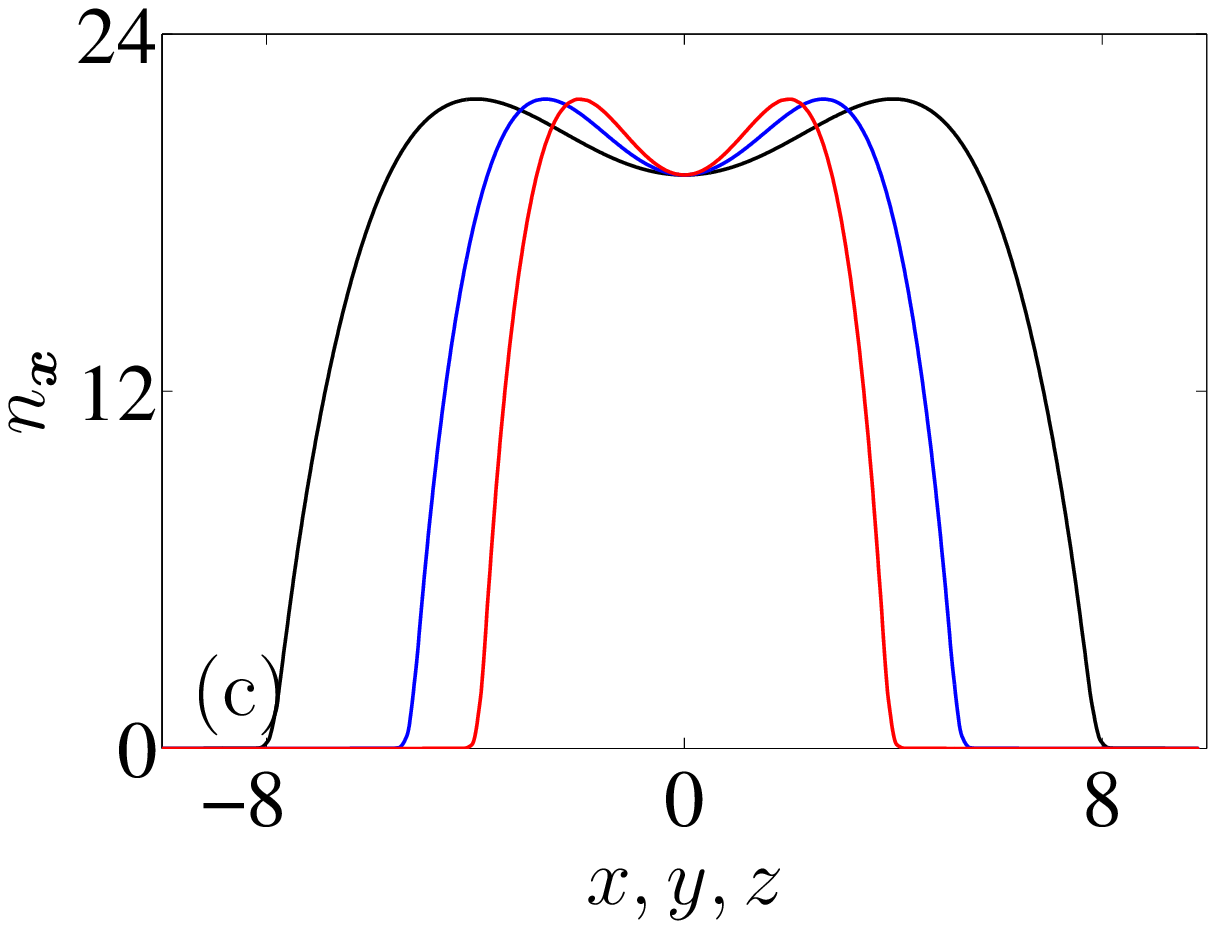}
\caption{Cuts through the density distribution $n_\x = |u(\x)|^2$ of the ground state of the 3D {Gross-Pitaevskii} equation with a Mexican hat potential. The cuts are along the $x$-axis for $y,z=0$ (black lines), along the $y$-axis for $x,z=0$ (blue lines) and along the $z$-axis for $x,y=0$ (red lines). The results are for a particle number of $N=10^4$,  the number of grid points for the simulation is $N_x=N_y=N_z=2^{7}$ and the grid size is $L_x=L_y=L_z=10$. The interaction strength is $g=10^{-1}$ in subplot (a), $g=10^{0}$ in subplot (b) and $g=10^{1}$ in subplot (c).}
\label{wave_3D_comp}
\end{figure}

As before, the form of the density distribution $n_\x = |u(\x)|^2$ again changes from a clear double Gaussian shape for a small interaction strength towards a strongly interaction broadened shape at a large interaction strength. Since we deal with an anisotropic trapping potential the width in the $z$-direction is smaller than in the $y$-direction, which in turn is dominated by the $x$-direction. In figure \ref{wave_3D_comp} we illustrate this behavior by depicting cuts through the density distribution along the $x$-axis for $y,z=0$ (black lines), along the $y$-axis for $x,z=0$ (blue lines) and along the $z$-axis for $x,y=0$ (red lines). The plots are for $N_x=2^{7}$ and interaction strengths of $g=10^{-1}$, $10^{0}$ and $10^1$.

\newpage

\section*{Conclusion} 

In this paper we performed a detailed study of the minimization of the Gross-Pitaevskii functional using the method of Sobolev gradients and the trajectory given in equation \eqref{sd}.  In the analytical part of our work, we obtained global existence and uniqueness for this trajectory as well as global convergence to a minimizer of the Gross-Pitaevskii functional.  Furthermore, in our numerical part we used finite differences to discretize this trajectory and were able to find stationary solutions in one, two, and three dimensions. In our study we found that the advantages our method presents are high numerical stability, fast convergence to the desired solution, and versatility in handling various parameters such as the trapping potential, the coupling constant and initial estimates.  The main new contribution of the analysis we performed is that we give an explicit minimizing sequence that converges to the minimizer of the Gross-Pitaevskii functional. In contrast to previous comparisons of numerical methods for the minimization of this functional, our goal is not to find the smallest computation time for a given potential with well known starting point. Here we show that using an initial random field, that is without knowing anything about the final ground state, we arrive at the desired solution of the stationary Gross-Pitaevskii equation without any additional technical tricks. This is due to the global convergence of our minimizing sequence and of particular interest for arbitrary external potentials that are not as well known as the standard choices, like a harmonic potential.

The work we present in this paper opens the door for many interesting studies of the Gross-Pitaevskii energy and equation.  As a follow up numerical project, we are currently studying the performance of this method for the case of the Gross-Pitaevskii energy with rotation, which corresponds to a BEC in a rotating frame. We plan on numerically investigating the formation of vortices, but will also consider analytical studies about existence and convergence properties of vortex lattices. On the other hand, we are currently working on the adaption of our scheme to study the time dependent Gross-Pitaevskii equation. General ideas of how such an investigation can be done are presented in \cite{jwnconsv}.

Apart from physical applications our method also provides a starting point for mathematical investigations, in particular from the perspective of nonlinear semigroup theory. Defining the operator $T$ by $T(t)x = z(t)$ where $z$ is given in \eqref{sd} with $z(0)=x$, we note that $T$ is a strongly continuous nonlinear semigroup. We propose to look for the Lie generator of this group as well as studying properties of it. Since the Sobolev gradient provides an explicit construction of the trajectory with global convergence, we expect various interesting properties of this generator.

\section*{Acknowledgments} 

We thank J. W. Neuberger, R. J. Renka, E. Kajari, R. Walser and W. P. Schleich for fruitful discussions and support in numerical details. Furthermore we acknowledge financial support by the Centre National de la Recherche Scientifique (P.K.) and support by the German Space Agency DLR with funds provided by the Federal Ministry of Economics and Technology (BMWi) under grant number DLR 50 WM 0837 (M.E.).

\appendix

\section{Explicit representation of the projection $P_u$} 
\label{projection}

In this section we first provide the general setting which is used to derive a formula for $P_u$ and thereafter derive the explicit representation of $P_u$. A general result from functional analysis states that if $Q$ is a linear transformation from one Hilbert space $X$ to another Hilbert space $Y$ and if $Q^*$, the adjoint of $Q$, as a continuous linear transformation from $Y$ to $X$ has closed range, then $X$ has a unique decomposition as $X = null(Q) \oplus range(Q^*)$. Furthermore if $P$ is the orthogonal projection of $X$ onto $null(Q)$, then $I-P$ is the orthogonal projection of $X$ onto the $range(Q^*)$.  Observe that $Q^*(QQ^*)^{-1}Q$ is symmetric from $X$ to $X$, idempotent, has range in $range(Q^*)$, and is fixed on this set.  Thus $Q^*(QQ^*)^{-1}Q$ is the orthogonal projection of $X$ onto $range(Q^*)$.  This makes $I-Q^*(QQ^*)^{-1}Q$ the orthogonal projection of $X$ onto $null(Q)$. 

Now we apply this general setting to our case with $Q_u=\beta'(u)$. Then for $h \in H = H^{1,2}(D,\C)$ we obtain
\BEW
Q_u h = \beta'(u)h = 2 \Re\langle h , u \rangle_{L^2}.
\EEW
We now want to compute the adjoint of $Q_u$ as a continuous linear transformation from $H$ to $\R \subset \C$.  Recall the definition of $M$ from Definition \ref{M}.  For $r \in \R$ and $h \in H$ 
\BEW
\langle Q_uh, r \rangle_{\R}=Q_uh*r=  2\Re\langle h , u \rangle_{L^2}*r =  \langle h ,2 r M  u \rangle_H .
\EEW
Thus we see that $Q_u^* r = 2 r M u$.  Now suppose the sequence $y_n = Q_u^* r_n = 2 r_n Mu$ is in the range of $Q_u^*$ and converges to $v \in H$.  Then this sequence is Cauchy and for $\epsilon > 0$, there is $N$ so that if $m,n \ \geq \ N$
\BEW
\|2(r_n - r_m)Mu\|_H =2|r_n - r_m| \|Mu\|_H < \epsilon \, .
\EEW
This implies that $\{r_n\}_{n \geq 1}$ is a Cauchy sequence in $\R$ and hence converges to some $r \in \R$.  Thus $\{2r_n Mu \}_{ n \geq 1}$ converges in $H$ to $2r Mu$ and $v=2r Mu $ is in the range of $Q_u^*$, which implies that the range of $Q_u^*$ is closed.  Thus the orthogonal projection of $H$ onto the nullspace of $Q_u$ is given by
\BEW
P_u=I - Q_u^*(Q_uQ_u^*)^{-1}Q_u \, .
\EEW
Now observe that $Q_uQ_u^* r = Q_u(2r Mu)$ and since $Q_u h = 2\Re\langle h , u \rangle_{L^2}$ for $h \in H$, one has
\BEW
Q_u(2rMu) = 2\Re\langle u , 2r Mu \rangle_{L^2}  = 4\Re\langle u , Mu \rangle_{L^2}* r \, .
\EEW
Thus  $Q_u Q_u^* r = 4\Re\langle u , Mu \rangle_{L^2} *r$ and  $(Q_u Q_u^*)^{-1}$ exists if $ u \neq 0$. Furthermore, $(Q_u Q_u^*)^{-1}$ is continuous and given by 
\BEW
(Q_u Q_u^*)^{-1}r= \frac{1}{4\Re\langle u , Mu \rangle_{L^2}}*r \, .
\EEW
Consequently we arrive at
\begin{eqnarray*}
Q_u^*(Q_uQ_u^*)^{-1}Q_u h &=& Q_u^*(Q_uQ_u^*)^{-1} 2\Re\langle u , h \rangle_{L^2} \\
&=& Q_u^* \frac{\Re\langle u , h \rangle_{L^2}}{2\Re\langle u , Mu \rangle_{L^2}} = \frac{\Re\langle u , h \rangle_{L^2}}{\Re\langle u , Mu \rangle_{L^2}} Mu
\end{eqnarray*}
and obtain the explicit representation of the projection $P_u$ as
\BEW
P_u h&=& \left(I - Q_u^*(Q_uQ_u^*)^{-1}Q_u\right) h = h - \frac{\Re\langle u , h \rangle_{L^2}}{\Re\langle u , Mu \rangle_{L^2}} Mu \, .
\EEW

\section{Lipschitz property of the projection $P_u$} 
\label{lipschitz}

\begin{propositiona} 
Suppose $\{u_n\}_{n \geq 1}$ is a sequence of members of $H$ that converges in $K$ to $u \neq 0 \in H$.  Then the sequence $\{P_{u_n} \}_{n \geq 1}$ converges in $L(H,H)$ to $P_u$.  Furthermore there is a constant $m$ so that $\|P_{u_n} - P_u \| \leq m\|u_n - u \|_{L^2}$.
\end{propositiona}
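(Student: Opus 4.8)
The plan is to argue directly from the closed form \eqref{eqnpu},
\[
P_u h = h - \frac{\Re\langle u, h\rangle_{L^2}}{\Re\langle u, Mu\rangle_{L^2}}\,Mu,
\]
viewing $u \mapsto P_u$ as built from three $u$-dependent pieces: the scalar numerator $A_u(h)=\Re\langle u,h\rangle_{L^2}$, the scalar $B_u = 1/c(u)$ with denominator $c(u)=\Re\langle u,Mu\rangle_{L^2}$, and the vector $C_u = Mu \in H$. Every bound below rests on two facts already available, namely the embedding $\|v\|_{L^2}\le\|v\|_H$ and the operator bound $\|M\|\le 1$ in each of its incarnations (Theorem \ref{Mcont}), together with the Cauchy--Schwarz inequality.

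First I would dispose of the denominator, which is the only genuinely delicate point. Applying the defining relation $\langle h,u\rangle_{L^2}=\langle h,Mu\rangle_H$ with $h=Mu\in H$ gives $\langle Mu,u\rangle_{L^2}=\|Mu\|_H^2$, so that $c(u)=\|Mu\|_H^2$; since $M$ is injective (Theorem \ref{Mcont}), this shows $c(u)>0$ for every $u\neq 0$. Because $u_n\to u$ in $L^2$, the norms $\|u_n\|_{L^2}$ are bounded by some $R$, and the bilinear splitting
\[
c(u_n)-c(u)=\Re\langle u_n-u,\,Mu_n\rangle_{L^2}+\Re\langle u,\,M(u_n-u)\rangle_{L^2}
\]
yields $|c(u_n)-c(u)|\le 2R\,\|u_n-u\|_{L^2}$; in particular $c(u_n)\to c(u)>0$, so there is an $N_0$ with $c(u_n)\ge c(u)/2$ for all $n\ge N_0$. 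For such $n$ we have $u_n\neq 0$, so every projection $P_{u_n}$ is well defined, and $|B_{u_n}|\le 2/c(u)$.

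Next, for $h$ with $\|h\|_H=1$ I would expand the difference by the telescoping identity
\[
P_uh-P_{u_n}h=(A_{u_n}-A_u)(h)\,B_{u_n}C_{u_n}+A_u(h)\,(B_{u_n}-B_u)\,C_{u_n}+A_u(h)\,B_u\,(C_{u_n}-C_u),
\]
and bound each summand. Cauchy--Schwarz and $\|M\|\le1$ give $|A_u(h)|\le\|u\|_{L^2}\le R$, $|(A_{u_n}-A_u)(h)|\le\|u_n-u\|_{L^2}$, $\|C_{u_n}\|_H\le R$ and $\|C_{u_n}-C_u\|_H=\|M(u_n-u)\|_H\le\|u_n-u\|_{L^2}$, while the denominator work gives $|B_{u_n}-B_u|=|c(u_n)-c(u)|/(c(u_n)c(u))\le (4R/c(u)^2)\,\|u_n-u\|_{L^2}$ for $n\ge N_0$. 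Each of the three terms is therefore at most a constant depending only on $R$ and $c(u)$ times $\|u_n-u\|_{L^2}$, uniformly in $h$. Taking the supremum over $\|h\|_H=1$ produces $\|P_{u_n}-P_u\|\le m\,\|u_n-u\|_{L^2}$ for $n\ge N_0$, which is simultaneously the Lipschitz estimate and the claimed convergence in $L(H,H)$; the finitely many remaining indices are absorbed by enlarging $m$.

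The hard part is exactly the uniform lower bound on $c(u_n)$: if the denominators were allowed to approach zero the reciprocal factor $B_{u_n}$ would be uncontrolled and no operator-norm estimate could hold. The identity $c(u)=\|Mu\|_H^2$ is what removes this difficulty, since it reduces positivity of the denominator to the injectivity of $M$ and, through the bilinear estimate, to the continuity of $u\mapsto c(u)$ on $L^2$. Once the denominators are bounded away from zero, the remainder is the routine product estimate carried out above.
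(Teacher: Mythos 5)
Your proof is correct and takes essentially the same route as the paper's: both argue directly from the explicit rank-one formula for $P_u$, use the identity $\Re\langle u, Mu\rangle_{L^2}=\|Mu\|_H^2$ together with the injectivity of $M$ to control the denominator, bound the denominator difference by the same bilinear splitting, and finish with a Cauchy--Schwarz/product-rule estimate using $\|M\|\le 1$. The only differences are organizational (the paper proves Lipschitz dependence of the scalar coefficients and then uses a two-term splitting, while you telescope into three terms), and your explicit lower bound $c(u_n)\ge c(u)/2$ for $n\ge N_0$, with the finitely many remaining indices absorbed into $m$, in fact makes rigorous a step the paper leaves implicit in its unexplained positive constant $m_1$.
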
 

\begin{proof}
The proof will be given in two steps.  Let $h \in H$ with $\|h\|_H=1$.  First note that 
\BEW
\langle Mu , u \rangle_{L^2} = \langle Mu , Mu \rangle_H = \|Mu \|_H^2 \, ,
\EEW
which yields that $Mu \neq 0$ since $u \neq 0$ and $M$ is injective. Furthermore, we recall that $M \in L(X,Y)$ where $X=H, L^2$ and $Y=H, L^2$. To minimize notation we write that $\Re \langle f , g \rangle_{L^2} = \langle f , g \rangle_K$.  
%

Thus we have  
\BEW
&&\left| \frac{\langle u_n , h \rangle_{K}}{\|Mu_n\|_H^2} - \frac{\langle u , h \rangle_K}{\|Mu\|_H^2}\right| = 
\left| \frac{\langle u_n , h \rangle_K\|Mu\|_H^2 -\langle u , h \rangle_K \|Mu_n\|_H^2}{\|Mu\|_H^2\|Mu_n\|_H^2}\right| \\
&&\quad \leq   \frac{\left|(\langle u_n , h \rangle_K-\langle u , h \rangle_K)\|Mu\|_H^2 \right|+{|\langle u , h \rangle_K (\|Mu\|_H^2 - \|Mu_n\|_H^2)|}}{\|Mu\|_H^2\|Mu_n\|_H^2}
\\
&&\quad \leq  \frac{\|u-u_n\|_{L^2}\|h\|_{L^2}\|Mu\|_H^2 + |\langle u , h \rangle_K (\|Mu\|_H^2 - \|Mu_n\|_H^2)|}{\|Mu\|_H^2\|Mu_n\|_H^2} \\
&&\quad \leq 
 \frac{\|u - u_n \|_{L^2} \|h \|_{L^2}\|u\|^2_{L^2} + \|u\|_{L^2} \|h\|_{L^2} |(\|Mu\|_H^2 - \|Mu_n\|_H^2)|}{\|Mu\|_H^2\|Mu_n\|_H^2} \\
&&\quad \leq 
\frac{\|u - u_n \|_{L^2} \|u\|^2_{L^2} + \|u\|_{L^2}  |(\|Mu\|_H^2 - \|Mu_n\|_H^2)|}{m_1}
\EEW
for some number $m_1$. Additionally, we note that 
\begin{eqnarray*}
\big|\|Mu\|_H^2 - \|Mu_n\|_H^2\big| &=& |\langle Mu , u \rangle_K - \langle Mu_n, u_n \rangle_K| \\
& \leq &
| \langle Mu , u - u_n \rangle_K| + |\langle Mu - Mu_n , u_n \rangle_K|   \\
& \leq &
\|Mu \|_{L^2} \|u - u_n \|_{L^2} + \|M(u_n - u) \|_{L^2} \|u_n\|_{L^2} \\
& \leq & m_2 \|u_n - u\|_{L^2}
\end{eqnarray*}
for some number $m_2$.  Now if we let
\BEW
c_n=\frac{\langle u_n , h \rangle_K}{\langle u_n , Mu_n \rangle_K} \quad \text{and} \quad c=\frac{\langle u , h \rangle_K}{\langle u , Mu \rangle_K} \, ,
\EEW
it is clear that there is a constant $m_3$ so that $|c_n - c| \leq m_3 \|u_n - u\|_{L^2}$. Thus we obtain 
\begin{eqnarray*}
\|(P_{u_n} - P_u)h \|_H &=& \|c_n Mu_n - c Mu \|_H \\
& \leq &
\|c_nMu_n - c_n Mu \|_H + \|c_n Mu - c Mu \|_H \\
& \leq &
|c_n| \|M(u_n-u)\|_H + |c_n - c|\|Mu\|_H
\end{eqnarray*}
and consequently there is a constant $m$ so that 
\BEW
\|(P_{u_n} - P_u)h\|_H &\leq& m \|u-u_n\|_{L^2} \, ,
\EEW
which concludes the proof.
\end{proof}

\section{Numerical implementation} 

We discretize the continuous problem in position space and for convenience we only consider the discretization of a one-dimensional problem, since the extension to higher dimensions is self-explanatory. The discrete form of the spatial variable $x$, restricted to the finite interval $[-L_x,L_x]$, reads
\BEW
\label{disc_x}
\hspace{-5mm}x = (x_1, \ldots, x_{N_x}) & \text{\;\;where\;\;} & x_n = 2L_x\Big(-\frac{1}{2}+ \frac{n-1}{N_x}\Big), \; 1 \leq n \leq N_x, \; N_x \in \N \; .
\EEW
The length $L_x$ has to be sufficiently large such that the wave function that we are interested in is numerically zero outside the interval $[-L_x,L_x]$. The existence of such a length is guaranteed by our requirement that the external potential diverges for $|\x|\to \infty$. The number of grid points $N_x$ is a very crucial parameter for the simulation because it is related to the grid spacing \mbox{$\Delta_x = 2L_x/N_x$} and thus determines the possible resolution of the scalar function $u$. The wave function $u$ on the grid introduced above is also represented as a vector
\BEW
u = (u_1, \ldots, u_{N_x}) & \text{\;\;where\;\;} & u_n = u(x_n), \; 1 \leq n \leq N_x \; .
\EEW

We approximate the spatial derivatives of the wave function via first order central differencing, which provides an accuracy for the first derivative approximation at cell centers of order ${\Delta_x}^2$. Denoting the cell centers by $e_i = (x_{i+1}+x_i)/2$ we obtain
\BEW
f'(e_i)= \frac{f_{i+1} - f_i}{\Delta_x} +O({\Delta_x}^2)\, .
\EEW
For this scheme the discretized version of the operator $W$ from the analytical part of this contribution is an $N_x-1 \times N_x$ matrix that reads
\BEW
W \; = \; \frac{1}{\Delta_x}\left(\!\!\begin{array}{rrrrrr}
-1 & 1 & 0 & \ldots & \ldots & 0 \\
0 & -1 & 1 & 0 & \ldots & 0 \\
\vdots & \ddots & \ddots & \ddots & \ddots & \vdots \\
0 & \ldots & 0 & -1 & 1 & 0\\
0 & \ldots & \ldots & 0 & -1 & 1\\
\end{array}\!\!\right) & \text{\;\;such that\;\;} & W(f) = \left(\!\!\begin{array}{c}\frac{f_2-f_1}{\Delta_x} \\ \vdots \\ \frac{f_{N_x}-f_{N_x-1}}{\Delta_x} \end{array}\!\!\right) \; .
\EEW

It is important to recall that this approximation converges towards the exact derivative in the limit of an infinitely fine grid. We use different values for the number of grid points $N_x$ which allows us to show that the Sobolev gradient indeed converges. 

Using this differencing, we see that the discrete inner product for $H^{1,2}(D)$, denoted by $\langle \cdot, \cdot \rangle_S$ is given by the following.  For $f,g$ being $\R^N$ valued functions
\BEW
\langle f , g \rangle_S= \langle f , g \rangle_N + \langle W(f) , W(g) \rangle_{N-1}
\EEW
where $\langle \cdot, \cdot \rangle_N$ denotes the $\R^N$ inner product.  Note that 
\BEW
\langle f , g \rangle_S= \langle f , g \rangle_N + \langle f , W^*W (g) \rangle_N = \langle f , (I + W^*W)(g) \rangle_N.
\EEW
$I+W^*W$ is positive definite and hence injective. Furthermore, it is invertible which allows us to obtain a result analogous to the infinite dimensional case for the relationship between the $H^{1,2}$ inner product and the $L^2$ inner product.  We see that
\BEW
\langle f , (I+W^*W)^{-1} g \rangle_S =  \langle f , g \rangle_N 
\EEW
and thus the analogous finite dimensional $M$ as given in the first part is $(I+W^*W)^{-1}$.  

Now, all we need for the numerical simulation is an easily accessible procedure to calculate the {Sobolev} gradient. We restrict ourselves to real-valued functions $u$, since it can be shown, that the ground state can be chosen to be real-valued. The Fr\'{e}chet derivative of the energy functional then reads
\BEW
E'(u)h &=& \int_D \big(\Re \langle \nabla u(\x), \nabla h(\x) \rangle + 2V_{trap}(\x) \Re\langle u(\x),h(\x) \rangle \\
&&\phantom{\int_D \big(} + 2g |u(\x)|^2 \Re \langle u(\x), h(\x) \rangle\big) \, \d\x  \,.
\EEW
where $\langle f, g \rangle$ is a short-hand notation for the product of $f$ and the complex conjugate of $g$. By adding and subtracting $\int_D \Re \langle u(\x), h(\x) \rangle \, \d\x$ from this term, one obtains
\BEW
E'(u)h = \Re \langle h , u \rangle_H + \Re \langle h ,  2V_{trap} u +  2g |u|^2 u - u \rangle_{L^2} \, . 
\EEW
Making use of the previously mentioned relationship between the $H^{1,2}$ inner product and the $L^2$ inner product we arrive at
\BEW
E'(u)h &=& \Re \langle h , u + M ( 2 V_{trap} u + 2g |u|^2 u - u) \rangle_H \, 
\EEW
where $M$ is as in Definiton \ref{M}.
Recall that the Sobolev gradient of $E$ at $u$, $\nabla_H E(u)$, was defined to be the element of $H$ so that 
\BEW
E'(u)h = \langle h , \nabla_H E(u) \rangle_H = \Re \langle h , \nabla_H E(u) \rangle_H
\EEW
as $E'(u)h$ is real valued.  Therefore, the Sobolev gradient of $E$ at $u$ is given by 
\BEW
\nabla_H E(u) = u + M ( 2V_{trap} u + 2g |u|^2 u - u ) \, .
\EEW
Incorporating the projection of $H$ onto the nullspace of $\beta'(u)$ one sees that 
\BEW
P_u \nabla_H E(u) = \nabla_H E(u) - \frac{\Re \langle u , \nabla_H E(u) \rangle_{L^2}}{ \Re\langle u , Mu \rangle_{L^2}} Mu \, .
\EEW

As soon as the minimization of the energy functional has converged in the sense of the relative change in energy from one iteration step to the next is less than $10^{-4}$, we switch to the well-known Newton method in order to find the exact solution of the stationary Gross-Pitaevskii equation
\BEW
- \frac{1}{2} \nabla^2 u(\x) + V_{trap}(\x) u(\x) + g |u(\x)|^2u(\x) = \mu u(\x)
\EEW
with the chemical potential potential $\mu$. Hence it only remains to show, how the calculation of the chemical potential is implemented. For this purpose, we multiply both sides of this equation by the complex conjugate of $u$ and integrate over the domain $D$. Thus, we arrive at
\BEW
\mu = \frac{1}{N} \int_D \left(\frac{ |\nabla u(\x)|^2}{2} + V_{trap}(\x) |u(\x)|^2 + g |u(\x)|^4\right)\d\x \, .
\EEW

\newpage

\bibliographystyle{elsart-num}
\bibliography{References}

\end{document}